\documentclass[preprint,12pt]{elsarticle}
\usepackage[utf8]{inputenc}
\usepackage{amsthm}
\usepackage{amsmath}
\usepackage{amsfonts}
\usepackage{amssymb}
\usepackage{graphicx}
\usepackage{epstopdf}
\usepackage{color}
\usepackage{multirow}
\usepackage{mathtools}
\usepackage{xcolor}
\usepackage{a4wide}
\usepackage{bm}
\usepackage{caption}
\usepackage{subcaption}
\usepackage[
  separate-uncertainty = true,
  multi-part-units = repeat
]{siunitx}
\usepackage{rotating}
\usepackage{verbatim}
\usepackage{lineno,hyperref}
\modulolinenumbers[5]
\usepackage{tikz}
\tikzstyle{noeud} = [circle, draw, fill=white, inner sep=2pt]
\journal{Theoretical Computer Science}
\newtheorem{theorem}{Theorem}[section]
\newtheorem{lemma}{Lemma}[section]

\newdefinition{remark}{Remark}[section]
\newtheorem{observation}{Observation}[section]
\newdefinition{definition}{Definition} \newdefinition{example}{Example}

\usepackage[noend,linesnumbered,ruled,lined]{algorithm2e}
\SetAlFnt{\scriptsize}
\usepackage{amsfonts}
\usepackage{makecell}

\usepackage[usenames,dvipsnames]{pstricks}
\usepackage{pstricks-add}
\usepackage{epsfig}
\usepackage{pst-grad} 
\usepackage{pst-plot} 
\usepackage[space]{grffile} 
\usepackage{etoolbox} 
\makeatletter 
\patchcmd\Gread@eps{\@inputcheck#1 }{\@inputcheck"#1"\relax}{}{}
\makeatother
\usepackage{mathtools}

\usepackage{graphicx}
\usepackage{tabularx}
\usepackage{textcomp}
\usepackage{xcolor}
\usepackage[labelformat=simple]{subcaption}

\usepackage{booktabs}

\usepackage{pgf,tikz,pgfplots}
\pgfplotsset{compat=1.15}
\usepackage{mathrsfs}
\usepackage{hyperref}
\usepackage{multicol}
\SetAlFnt{\scriptsize}
\usepackage{amsfonts}
\usepackage{makecell}
\usepackage{bm}
\usepackage{multirow}

\usepackage{multirow}
\usepackage{array}
\usepackage{makecell}
\usepackage{graphicx}
\usepackage{orcidlink}
\newcolumntype{P}[1]{>{\raggedright\arraybackslash}p{#1}}
\newtheorem{problem}{Problem}

\SetCommentSty{mycommfont}

\journal{Arxiv}

\usepackage{a4wide}
\begin{document}
\usetikzlibrary{arrows}
\begin{frontmatter}


\title{Black Hole Search: Dynamics, Distribution, and Emergence\footnote{A preliminary version of this work appeared in the proceedings of 27th Edition of SSS 2025 \cite{BHS_SSS25}.}}

\author[inst1]{Tanvir Kaur\orcidlink{0009-0002-3651-994X}}

\author[inst1]{Ashish Saxena\orcidlink{0009-0007-4767-8862}}

\author[inst2]{Partha Sarathi Mandal\orcidlink{0000-0002-8632-5767}}

\author[inst1]{Kaushik Mondal\orcidlink{0000-0002-9606-9293}}
\cortext[mycorrespondingauthor]{Corresponding author}
\ead{kaushik.mondal@iitrpr.ac.in}

\affiliation[inst1]{organization={Department of Mathematics},
            addressline={Indian Institute of Technology Ropar}, 
            city={Rupnagar},
            postcode={140001}, 
            state={Punjab},
            country={India}}
            
\affiliation[inst2]{organization={Department of Mathematics},
            addressline={Indian Institute of Technology Guwahati}, 
            city={Guwahati},
            postcode={781039}, 
            state={Assam},
            country={India}}


 \begin{abstract}
  A black hole is a malicious node in a graph that destroys resources entering into it without leaving any trace. The problem of Black Hole Search (BHS) using mobile agents requires that at least one agent survives and terminates after locating the black hole. Recently, this problem has been studied on 1-bounded 1-interval connected dynamic graphs \cite{BHS_gen}, where there is a footprint graph, and at most one edge can disappear from the footprint in a round, provided that the graph remains connected. In this setting, the authors in \cite{BHS_gen} proposed an algorithm that solves the BHS problem when all agents start from a single node (rooted initial configuration). They also proved that at least $2\delta_{BH} + 1$ agents are necessary to solve the problem when agents are initially placed arbitrarily across the nodes of the graph (scattered initial configuration), where $\delta_{BH}$ denotes the degree of the black hole. In this work, we present an algorithm that solves the BHS problem using $2\delta_{BH} + 17$ initially scattered agents. Our result matches asymptotically with the rooted algorithm of \cite{BHS_gen} under the same model assumptions.
  \vspace{0.15cm}
  
  Further, we study the Eventual Black Hole Search (\textsc{Ebhs}) problem, in which the black hole may appear at any node and at any time during the execution of the algorithm, destroying all agents located on that node at the time of its appearance. However, the black hole cannot emerge at the home base in round~0, where the home base is the node at which all agents are initially co-located. Once the black hole appears, it remains active at that node for the rest of the execution. This problem has been studied on static rings~\cite{Bonnet25}; here we extend it to arbitrary static graphs and provide a solution using four  agents. Moreover, it does not require any knowledge of global parameters or additional model assumptions.
\end{abstract}

\begin{keyword}
Dynamic Graphs \sep
Time-Varying Graphs \sep
Black Hole Search \sep
Eventual Black hole search\sep
Mobile Agents \sep
Distributed Algorithms \sep
Deterministic Algorithms.

\end{keyword}

\end{frontmatter}

\section{Introduction}
Mobile agents are software entities that work in networking environments. In real-life scenarios, these networking environments are vulnerable to various risks. A lot of research is concentrated on safeguarding network sites (hosts) against malicious agents and, conversely, defending agents from attacks by the hosts. In this work, we address a particularly dangerous type of host, referred to as a \textit{black hole}, which is a network site that eliminates visiting agents immediately, leaving no trace. The presence of a black hole in systems that support code mobility is not uncommon. For example, a site can unintentionally become a black hole due to an unnoticed failure or if a virus, installed unknowingly, discards any incoming communication (such as marking it as spam). In these instances, identifying this harmful host is crucial. This problem of identifying such a harmful host in the graph through a team of mobile agents is known as the Black Hole Search (BHS) problem. Precisely, an agent is said to have located the black hole (thereby solving the BHS problem) if it terminates its algorithm at a safe node\footnote{A safe node is a node that is not the black hole.} $v$ that is adjacent to the black hole and outputs the port number that leads from $v$ to the black hole \cite{dyn_ring_jrnl, bhattacharya_2023, Adri_tori, BHS_gen}.

The BHS problem has been extensively studied under various settings, depending on the agents’ communication capabilities, synchronization assumptions, and knowledge of the graph. Most of the literature focuses on static graphs~\cite{Pelc_2005, Paola_2010, Shantanu_2011, Paola_2004}, while more recent work considers dynamic graphs such as dynamic rings~\cite{dyn_ring_jrnl}, dynamic cactuses~\cite{bhattacharya_2023}, dynamic tori~\cite{Adri_tori}, and dynamic graphs with arbitrary underlying topology~\cite{BHS_gen}. In~\cite{BHS_gen}, the authors solve BHS for a rooted initial configuration, whereas in this work, we consider the more general scattered initial configuration.

Another variant of BHS allows the black hole to appear at an arbitrary time during the execution rather than being present from the beginning~\cite{Bonnet25}. This \emph{eventually emerging black hole} can destroy all agents located on its node at the moment of its appearance, making the problem significantly more challenging. In this work, we also study eventually emerging black hole search (\textsc{Ebhs}) problem on arbitrary static graphs.

\subsection{The Model and the Problem}

\noindent \textbf{Dynamic graph model:} 
A dynamic network is modeled as a \emph{time-varying graph (TVG)}, denoted by \( \mathcal{G} = (V, E, T, \rho) \), where \( V \) is a set of nodes, \( E \) is a set of edges, \( T \) is the temporal domain, and \( \rho : E \times T \rightarrow \{0, 1\} \) is the \emph{presence function}, which indicates whether a given edge is present at a given time. Here, $\rho(e,t)=0$ indicates that the edge $e\in E$ is not present at round $t\in T$, and $\rho(e,t)=1$ indicates that the edge $e\in E$ is present at round $t\in T$. The static graph \( G = (V, E) \) is referred to as the \emph{underlying graph} (or \emph{footprint}) of the TVG \( \mathcal{G} \), where \( |V| = n \) and \( |E| = m \). For a node \( v \in V \), let \( E(v) \subseteq E \) denote the set of edges incident on \( v \) in the footprint. The \emph{degree} of node \( v \) is defined as \( \delta_v = |E(v)| \), and the \emph{maximum degree} of \( G \) is given by \( \Delta = \max_{v \in V} \delta_v \). The nodes in \( V \) are assumed to be \emph{anonymous}, and have no storage unless it is mentioned. Each edge incident to a node \( v \) is locally labeled with a \emph{port number}. This is defined by a bijective function \( \lambda_v : E(v) \rightarrow \{0, \ldots, \delta_v - 1\} \), which assigns a distinct label to each incident edge of \( v \). No further assumptions are made about the labeling. Under the assumption that time is discrete, the TVG \( \mathcal{G} \) can be viewed as a sequence of static graphs \( \mathcal{S}_{\mathcal{G}} = \mathcal{G}_0, \mathcal{G}_1, \ldots, \mathcal{G}_r, \ldots \), where each \( \mathcal{G}_r = (V, E_r) \) denotes the \emph{snapshot} of \( \mathcal{G} \) at round \( r \), with \( E_r = \{ e \in E \mid \rho(e, r) = 1 \} \). The set of edges not present at round \( r \) is denoted by \( \overline{E}_r = E \setminus E_r \subseteq E \). If $E_r=E$ for every round $r$, then $\mathcal{G}$ is called \underline{\emph{static graph}}.

Based on this representation, we recall the following definitions, commonly used in prior works~\cite{Kuhn_2010, GOTOH20211, Adri_tori, BHS_gen,bhattacharya_2023}.

\begin{definition}[\cite{Kuhn_2010}]
\textbf{(1-Interval Connectivity)} A dynamic graph \( \mathcal{G} \) is said to be \emph{1-interval connected} (or \emph{always connected}) if, for every snapshot \( G_t \in \mathcal{S}_{\mathcal{G}} \), the graph \( G_t \) is connected.
\end{definition}

\begin{definition}[\cite{GOTOH20211}]
    \textbf{(\( 1 \)-Bounded 1-Interval Connectivity)} A dynamic graph \( \mathcal{G} \) is said to be \emph{\( 1 \)-bounded 1-interval connected} if it is always connected and, for all \( t \in T \), the number of missing edges satisfies \( |\overline{E}_t| \leq 1 \).
\end{definition}

In particular, suppose an edge $e$ disappears from $G$ at some round, resulting in the graph $G \setminus \{e\}$, which remains connected as per the definition of 1-Interval Connectivity. By the time another edge $e'$ disappears, the edge $e$ must reappear as it is 1-bounded. In this work, we consider 1-bounded 1-interval connected TVGs. 

\vspace{0.1cm}
\noindent\textbf{Agent model}: Initially, $l$ many agents are scattered across the safe nodes of the graph. Note that, in a scattered configuration, one node may contain multiple agents as well. Each agent has a unique identifier assigned from the range $[1, n^c]$, where $c > 1$ is a constant. Each agent knows its ID. Agents do not have any prior knowledge of $l$, or any other graph parameters like $n$, $\delta_{BH}$, $\Delta$. Here, $\delta_{BH}$ denotes the degree of the black hole. The agents are equipped with $O(\log n)$ bits of memory and have access to the whiteboard. When an agent visits a node, it knows the degree of that node in the footprint $G$. However, when an agent is at a node $v$, it cannot detect if any edge incident to $v$ is missing. To be precise, say an agent $a$ reaches a node $v$ at round $t$ and $e_v$ be an edge associated with $v$, then the agent $a$ can not determine $\rho(e_v,\,t)$. It can only understand this by traversing through the edge. If it tries to move through an edge and is unable to do so, it understands that the edge is missing. In other words, it understands that $\rho(e_v,\,t)=0$ at the beginning of the next round $t+1$. Otherwise ($\rho(e_v,\,t)=1$), it can successfully move through that edge. An agent learns the port number through which it enters into a node. If two agents move via an edge, they can not see each other during the movement.
\\
Our algorithm runs in synchronous rounds. In each round, an agent performs one \textit{Look-Compute-Move} (LCM) cycle at a safe node:
\begin{itemize}
    \item \textit{Look}: The agent reads the contents of the whiteboard of the node it occupies and sees if there are other agents at the same node. Each agent can read all the variable values of all other agents present at the same node. The agent also understands whether it had a successful or an unsuccessful move in the last round.
    \item \textit{Compute}: On the basis of the information obtained in the Look phase, the agent decides whether to move through the port or not in this round. The agent may write some information on the whiteboard as per the computation.
    \item \textit{Move}: In compute, if an agent computes some port $p$ to move, it tries to move through the port $p$; if the corresponding edge is present in that round, the agent reaches the adjacent node; otherwise, it remains at the current node.
\end{itemize}

We first define a black hole and an eventually emerging black hole following prior work, and then present the formal problem statement considered in this paper.

\begin{definition}[Black Hole \cite{Dobrev2007BHSRing}]
\emph{A \emph{black hole} is a node that destroys every agent that enters it without leaving any trace and is present from the beginning of the execution (i.e., at round $0$).}
\end{definition}

\begin{definition}[Eventually Emerging Black Hole \cite{Bonnet25}]
\emph{An \emph{eventually emerging black hole} is a node, say $v_{BH}$, that may not be a black hole initially but becomes one at some round $r \ge 0$ and remains a black hole for the rest of the execution. Any agent located at $v_{BH}$ at the end of round $r-1$ that does not leave before round $r$, as well as any agent that enters $v_{BH}$ at a round $r' \ge r$, is destroyed without leaving any trace.}
\end{definition}
\noindent If the black hole emerges at round $r=0$, the definition reduces to the black hole. In this work, we study the following two problems.
\begin{problem}
    [\textbf{1-Black Hole Search in Dynamic Graphs}]\label{prb:1} \emph{Let \( \mathcal{G} \) be a \( 1 \)-bounded 1-interval connected TVG. There is a node $v_{BH}$ in $G$ that is a black hole. Initially, $l$ many agents are located at an arbitrary subset of the safe nodes in $G$. The agents have no prior knowledge of the graph or its global parameters. The objective is for at least one agent to survive, locate the black hole, and terminate.}
\end{problem}

\begin{problem}[\textbf{Eventual Black Hole Search in Static Graphs}]\label{prb:2}
 \emph{Let \( \mathcal{G} \) be a static graph. There is a node in $G$ which is an eventually emerging black hole $v_{BH}$. Initially, $l$ many agents are co-located at a designated home base node $\mathcal{H}$. If node $\mathcal{H}$ is an emerging black hole, it cannot emerge at round 0, but it may emerge at any round $r\geq 1$. The goal is for at least one agent to survive, locate $v_{BH}$, and ultimately terminate.}
\end{problem}

In this work, we denote Problem \ref{prb:1} by 1-BHS, and Problem \ref{prb:2} by \textsc{Ebhs}. We define the notion of time complexity for both problems as follows.

\medskip
\noindent\textbf{Time Complexity:}
For \emph{1-BHS}, the time complexity is measured from round $0$ until at least one agent detects the location of the black hole.  
For \textsc{Ebhs}, the time complexity is measured from the round $r$ at which the black hole emerges until at least one surviving agent identifies its location.

\subsection{Related Work}
The BHS problem was originally introduced in the context of static networks \cite{Paola_2006}, where the goal is for a team of mobile agents to locate the black hole that destroys any agent entering it, without any observable trace. The problem is thoroughly explored in static graphs under various assumptions regarding agent capabilities, memory limitations, communication models, and initial placements \cite{Pelc_2005, Paola_2010, Shantanu_2011, Paola_2006, MarkouP12, complexity_black_hole, bhs_tokens, Paola_2002, Paola_2004}.

In recent years, research has increasingly turned to dynamic graphs, where the network topology changes over time. Within this setting, most existing work studies both the problems of 1-BHS and $f$-BHS \footnote{Similar to the 1-Bounded 1-Interval Connectivity, \( f \)-bounded 1-Interval connected dynamic graph \( \mathcal{G} \) is the graph that is always connected and, for all \( t \in T \), the number of missing edges satisfies \( |\overline{E}_t| \leq f \). Note that, for $f$-bounded 1-interval connected graphs, $\rho(e,t)$ can take value 0 for at most $f$ many edges $e\in E$ at a round $t\in T$.  Analogously, if \( \mathcal{G} \) is an \( f \)-bounded 1-interval connected dynamic graph with a black hole located at some node in its footprint \( G \), then the black hole search problem is referred to as $f$-BHS.}. These problems are explored across a range of graph classes. The authors in \cite{dyn_ring_jrnl} first investigated the problem of 1-BHS on dynamic rings. The authors provided a characterization of the problem based on two cases: face-2-face communication model and when the nodes of the graph have storage in the form of a whiteboard. When the three agents are initially co-located, they can solve the 1-BHS problem in the whiteboard and pebble communication model in $\Theta(n^{1.5})$ rounds even if the agents are anonymous. If agents can communicate only when they are on the same node, the authors show that the 1-BHS problem can be solved in $\Theta(n^2)$ moves and rounds if the agents have unique IDs and is impossible to solve if the agents are anonymous. Further, the authors studied the problem with scattered agents in rings. They also show that it is impossible to solve the 1-BHS problem with scattered agents only using a face-to-face communication model.

Further, in \cite{bhattacharya_2023}, the authors investigate both the problems of 1-BHS and $f$-BHS on the cactus graphs. For 1-BHS, the authors provide a lower bound of $\Omega(n^{1.5})$ rounds, and an upper bound of $O(n^2)$ rounds with the help of three agents. Further, for $f$-BHS, the authors provided the impossibility of finding the black hole with $f+1$ agents. Later in \cite{Adri_tori}, the authors consider the problem of $(n+m)$-BHS problem where the tori is of size $n\times m$. They consider both rooted and scattered initial configurations by the agents. They provided an algorithm that solves $(n+m)$-BHS using $n+3$ agents in $O(n\cdot m^{1.5})$ rounds, where the size of tori $n\times m$ ($3\leq n \leq m$). Further, using $n+4$ agents, they provided an algorithm that solves $(n+m)$-BHS in $O(m\cdot n)$ rounds. With scattered agents, they provide two algorithms. The first one uses $n+6$ agents to solve the $(n+m)$-BHS problem in $O(n\cdot m^{1.5})$ rounds. The second one uses $n+7$ agents to solve the $(n+m)$-BHS problem in $O(n\cdot m)$ rounds. Recently, the problems of 1-BHS and $f$-BHS are studied for general graphs in \cite{BHS_gen}. The authors prove that the 1-BHS problem cannot be solved with fewer than $2\delta_{BH}$ agents arbitrarily placed at safe nodes, even when both agents and nodes are equipped with $O(\log n)$ bits of memory. They also show that, for the $f$-BHS problem, no solution exists with $2f+1$ co-located agents, even when memory and storage are unbounded. While the paper presents algorithms for both models, these solutions assume a \textit{rooted initial configuration}, where all agents begin at a common safe node. In this work, we consider a more general scenario in which agents are arbitrarily placed across safe nodes in the graph. We adopt the same dynamic graph model as in \cite{BHS_gen}.

\textsc{Ebhs} is introduced on static rings \cite{Bonnet25}. The authors present a time-optimal solution using four agents. They further show that time optimality can be achieved with three agents, provided either $O(1)$ node storage is available, or the agents know $n$. In addition, they establish an impossibility result for two agents.

There are also studies on Byzantine black holes, in which the black hole may arbitrarily choose whether to destroy visiting agents in each round. In \cite{GoswamiBD024}, the authors solve ring exploration in the presence of a Byzantine black hole for both rooted and scattered initial configurations. Subsequently, \cite{Adri_ByzBH} extends the study to arbitrary graphs with agents starting from a rooted configuration. This setting is strictly more general and harder than \textsc{Ebhs}, and consequently requires more agents.

In contrast, in this work, we consider eventually emerging black holes on arbitrary static graphs.

\subsection{Our Contribution}
In this work, we provide an algorithm \textsc{Algo\_Bhs\_Scattered} that solves the problem of 1-BHS using $2\delta_{BH}+17$ agents starting from a scattered initial configuration, where $\delta_{BH}$ is the degree of the black hole in the footprint. The time required by the agents to solve this case is $O(m^2)$, and each agent requires $O(\log n)$ bits of memory to run this algorithm, where $n$ denotes the number of nodes in the graph and $m$ denotes the number of edges in the graph (refer Theorem \ref{th:scattered}). We match the asymptotic time complexity of the solution of 1-BHS from a rooted initial configuration in \cite{BHS_gen} under the same model assumptions, using the same amount of memory per agent and storage per node. 
Table \ref{fig:literature} illustrates comparison of our results with the results of \cite{BHS_gen}.

\begin{table}[ht!]
\vspace{-0.35cm}
\centering
\scriptsize
\renewcommand{\arraystretch}{1.25}
\setlength{\tabcolsep}{6pt}
\begin{tabular}{|c|c|c|c|c|c|c|}
\hline
 & \textbf{IC} & \textbf{Problem} & $\bm{l}$ & \textbf{Node storage} & \textbf{Agent memory} & \textbf{Time complexity} \\ 
\hline\hline

\cite{BHS_gen} & Rooted & 1-BHS & 9 
& $O(\log n)$ & $O(\log n)$ & $O(m^2)$ \\
\hline
\cite{BHS_gen} & Scattered & $1$-BHS & $2\delta_{BH}$ 
& $O(\log n)$ & $O(\log n)$ 
& Impossible \\
\hline
\makecell{This\\work} & Scattered & 1-BHS & $2\delta_{BH}+17$ 
& $O(\log n)$ & $O(\log n)$ & $O(m^2)$ \\
\hline

\end{tabular}
\vspace{0.15cm}
\caption{Summary of existing results on 1-BHS on general graphs and our result. IC denotes the initial configuration and $l$ denotes the number of agents.}
\label{fig:literature}
\end{table}

\begin{table}[h]
\centering
\scriptsize
\renewcommand{\arraystretch}{1.25}
\setlength{\tabcolsep}{6pt}
\begin{tabular}{|c|c|c|c|c|c|c|}
\hline
Static Graph &
IC &
$l$ &
Pebble &
\begin{tabular}[c]{@{}c@{}}Node\\Storage\end{tabular} &
Knowledge &
Complexity \\ \hline\hline

\multirow{5}{*}{Ring \cite{Bonnet25,bonnet_2026}} &
\multirow{5}{*}{Rooted} &
2 & Yes & Infinite & $n$ & Impossible \\ \cline{3-7}

& & 4 & No  & No     & No  & $\Theta(n)$ \\ \cline{3-7}

& & 3 & Yes & $O(1)$ & No  & $\Theta(n)$ \\ \cline{3-7}

& & 3 & No  & No     & $n$ & $\Theta(n)$ \\ \cline{3-7}

& & 3 & No  & No     & No  & $O(n^2)$ \\ \hline

\multirow{2}{*}{\begin{tabular}[c]{@{}c@{}}General Graph\\(This Work)\end{tabular}} &
\multirow{2}{*}{Rooted} &
4 & No & No & No & $\mathrm{poly}(n)^*$ \\ \cline{3-7}

& & 4 & No & No & $n$ & length of the UXS \\ \hline

\end{tabular}
\caption{Summary of existing results of \textsc{Ebhs} and our results on \textsc{Ebhs} and our results. IC denotes the initial configuration and $l$ denotes the number of agents. $^*$If the black hole emerges within $\mathrm{poly}(n)$ time.}
\label{fig:literatureebhs}
\end{table}

We provide an algorithm to solve \textsc{Ebhs} using 4 agents on general graphs in polynomial time, considering that the black hole emerges in the graph within polynomial time (refer Theorem\ref{thm:ebhs} and Remark \ref{re:1}). With the knowledge of $n$, the runtime becomes the length of the UXS on any graph of size $n$ (refer Remark \ref{re:2}). A summary of existing results and our results are presented in Table \ref{fig:literatureebhs}. Further, we discuss how any single agent exploration algorithm on arbitrary static graph can be simulated using our 4 agents strategy to solve the problem and this shows a tradeoff between time complexity and memory/storage requirement (refer Remark \ref{re:4}).


\subsection{Preliminaries}\label{sec:preliminaries}
In this section, we recall the strategy which is used in the existing literature \cite{BHS_gen}, known as the cautious movement of agents. For the sake of completeness, we provide the strategy of cautious movement by three agents.\\

\noindent\textbf{Cautious walk:} Cautious walk is a movement strategy that ensures that one agent stays alive if a group of agents performs this movement while exploring an edge $(u,v)$ from a node $u$ without knowing that $v$ is the black hole. Let $a_1$ (leader), $a_2$ (first helper), and $a_3$ (second helper) be the three agents at a safe node $u$ and these agents want to traverse through the edge $(u,v)$ to reach $v$. Since the agents are not aware whether $v$ is a safe node or a black hole, it is necessary to first verify this. Thus, the first helper $a_2$ attempts to move through $(u,v)$. If $a_2$ is successful in moving through $(u,v)$ at some round $t$, then in the round $t+1$, $a_3$ attempts to move through $(u,v)$ while $a_2$ (if alive) attempts to move through $(v,u)$. If $a_3$ is successful in its movement and $a_2$ is not present at the current node, then $a_1$ concludes that $v$ is a black hole. If $a_3$ is successful in its movement and $a_2$ is also present at $u$, then it is concluded that $v$ is a safe node, and so, $a_1$ and $a_2$ move through the edge $(u,v)$. In this way, cautious movement can be executed by three agents.\\

\noindent\textbf{Idea of the algorithm for rooted agents \cite{BHS_gen}:} The authors first provide an exploration strategy using $3$ agents that ensures the exploration of a TVG when there is no black hole. Further, they replace each edge movement of the exploration strategy with the cautious walk. The role of one agent is taken over by three agents that are sufficient to do the cautious movement. Thus, a total of 9 agents can solve 1-BHS on an arbitrary graph from a rooted configuration. The final theorem is as follows:
\begin{theorem}
    \cite{BHS_gen} The problem of 1-BHS can be solved by 9 agents starting from a rooted initial configuration in $O(m^2)$ rounds, requiring $O(\log \delta_v)$ storage per node.
\end{theorem}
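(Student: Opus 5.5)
The theorem is cited from \cite{BHS_gen}, so I would reconstruct its proof in two layers. The first layer is a black-hole-free exploration of a 1-bounded 1-interval connected TVG by $3$ co-located agents. The second layer is a simulation in which each of these three agents is replaced by a team of three agents, and every edge traversal is carried out by the cautious walk of the Preliminaries. This gives $3\cdot 3=9$ agents.

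\textbf{First layer (exploration with 3 agents).} I would use a DFS-style exploration. The whiteboard of a node stores only the parent port and the next port to try, which costs $O(\log \delta_v)$ bits per node. The key difficulty is that an edge the DFS wants to use may be missing. Since at most one edge is missing in any round, I would assign roles as follows. One agent tries to traverse the intended edge $e$. If it fails, then $e$ is the unique missing edge in that round. Because the graph stays connected, a second agent can instead explore along other unexplored ports, or take a detour toward the other endpoint. The third agent stays as an anchor and records progress on the whiteboard, so that the group can reunite.

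The main obstacle is the adversary blocking $e$ forever. I would handle it with an amortization argument. While $e$ is missing, no other edge can be missing, so every other edge the team attempts succeeds. Each round of blocking therefore lets the free agents make progress, either by exploring a new edge or by completing a walk of length $O(m)$ back to a rendezvous point. Each of the $m$ edges then costs at most $O(m)$ rounds of delay, for a total of $O(m^2)$ rounds. I expect proving this charging argument rigorously, together with guaranteeing that the three agents reunite with only $O(\log n)$ memory, to be the hardest step.

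\textbf{Second layer (cautious simulation).} Each agent role becomes a triple (leader, first helper, second helper). Each simulated edge move becomes a cautious walk, which takes $O(1)$ rounds if the edge is present. If the edge disappears mid-walk, the helpers retry, and this retrying is absorbed by the same one-missing-edge accounting as in the first layer. The cautious walk guarantees that at most one agent of a triple enters the black hole per explored edge. When the walk fails (the helper does not return while the second helper's traversal succeeds), the leader knows that port leads to the black hole, outputs it, and terminates. The first such detection solves 1-BHS. The time bound carries over with a constant-factor blow-up, giving $O(m^2)$ rounds, and the node storage remains $O(\log\delta_v)$.
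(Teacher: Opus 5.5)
Your two-layer decomposition is the same as the outline the paper gives for \cite{BHS_gen}: a black-hole-free exploration by three agents, with each agent replaced by a leader/helper/helper triple running the cautious walk, for $3\cdot 3=9$ agents. The genuine gap is that the first layer, which is where both the $O(m^2)$ time bound and the $O(\log\delta_v)$ storage bound come from, is never constructed, and your sketch does not support it.

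The premise of your charging argument, ``while $e$ is missing, no other edge can be missing'', holds only round by round. The adversary need not keep $e$ absent: it may restore $e$, delete the edge the detouring agent needs next, and switch back, as often as it likes. To get $O(m^2)$ you need an explicit potential that increases in every round, whichever single edge is deleted, and that is bounded by $O(m^2)$. Your ``walk of length $O(m)$ back to a rendezvous point'' is paid once per blocking episode, and nothing in the sketch bounds the number of episodes by $O(m)$.

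The roles are also not yet a protocol. A ``detour toward the other endpoint'' presupposes a path the agent does not know; finding it is the exploration problem itself. Suppose the second agent continues the DFS past the blocked port while the first keeps retrying. Once the adversary lets the first agent through, there are two DFS fronts. A single parent/next-port record per node does not say how both fronts are maintained, how the waiting agent learns that its target was already reached, or how the three agents reunite. These are exactly the steps you defer as ``the hardest'', so layer one, and with it the theorem, is asserted rather than proved.

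There are also three smaller points about layer two.
\begin{itemize}
\item In the cautious walk of the Preliminaries, the leader concludes only after the second helper has crossed in a round in which the first helper did not return. So when $v$ is the black hole, both helpers enter it. Your claim that at most one agent of a triple enters the black hole contradicts your own detection rule. This is harmless, since the leader stays at the safe node $u$ and terminates.
\item The claim that ``retrying is absorbed by the same accounting'' needs an argument. A simulated move now needs its edge present in three separate rounds, and different triples may be in different phases. The layer-one progress measure therefore has to be stated in real rounds rather than simulated moves.
\item You should state explicitly that the execution coincides with a black-hole-free execution, on the same footprint and edge schedule, up to the first cautious walk into $v_{BH}$. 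That is what converts exploration within $O(m^2)$ rounds into detection within $O(m^2)$ rounds.
\end{itemize}
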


We now discuss the challenges that arise when attempting to solve the 1-BHS problem from a scattered initial configuration using a lesser number of agents.\\

\noindent\textbf{Challenges with fewer agents:}
An intuitive approach to solve the 1-BHS problem is to first design an exploration strategy that guarantees every node of the graph is visited by at least one agent, even under adversarial edge deletions (when the graph is 1-bounded 1-interval connected). However, since the graph contains a black hole, it becomes essential that every movement performed by an agent is cautious. If agents move without caution, they may all fall into the black hole and die. Therefore, solving the 1-BHS problem requires a combination of two strategies: cautious movement and exploration of dynamic graphs.

We assume the graph to be 1-bounded 1-interval connected. As shown in \cite{BHS_gen}, it is impossible to solve the 1-BHS problem on such graphs using only $2\delta_{BH}$ agents. Let us assume that we have $2\delta_{BH}+1$ agents. Suppose  $2\delta_{BH}$ agents are initially close by the black hole and all die soon after, keeping information on the adjacent nodes. However, since only one agent remains alive in the graph, the problem is solved only if this agent reaches one of those $\delta_{BH}$ nodes and reads the information. Unfortunately, the adversary can always prevent this agent’s movement by making an edge disappear, and hence it might be impossible to guarantee exploration. Now, let us assume that we have $2\delta_{BH}+3$ agents, and so there are 3 alive agents even after $2\delta_{BH}$ agents die. Moreover, let us assume that those 3 alive agents are together. We know 3 rooted agents can explore 1-bounded 1-interval graph if there is no black hole as provided in \cite{BHS_gen}. 
Note that the death of $2\delta_{BH}$ agents ultimately isolates the black hole, if the information of all the ports corresponding to the edges leading to the black hole are kept at those nodes.
The problem is that those 3 agents do not know when to start the algorithm of exploration by 3 agents, as they do not know if and when all the remaining $2\delta_{BH}$ agents die. Further, as we start from a scattered initial configuration, those 3 agents may not be together, which adds to the difficulty. Note that none of the above discussions are formal; we are unable to provide a  tighter bound than that of $2\delta_{BH}$ with scattered agents as provided in \cite{BHS_gen} as of now.

We start with a number of extra agents such that the adversary is either bound to form a group of 9 agents, where any such group can deterministically transition to the existing algorithm of rooted 1-BHS without bothering about the situation of the other agents. If such a group is not formed, then at least $2\delta_{BH}+1$ agents explore the graph, where at most $2\delta_{BH}$ may die, and the other agent finds the black hole. 

\section{Algorithm to solve 1-BHS}\label{sec:1BHS}
In this section, we provide an algorithm that uses $2\delta_{BH}+17$ agents to solve 1-BHS when the agents are initially present arbitrarily at the safe nodes of the graph, and each node has $O(\log n)$ storage. We begin with a high-level idea of the algorithm.

\subsection{High-level Idea of our Algorithm}
Let us first consider that there is sufficient storage at each node to store the information corresponding to all the $2\delta_{BH}+17$ agents. Each agent, in this case, simply runs its own DFS unless it is stuck by a missing edge or dies in the black hole. If agents do not move cautiously, all the agents may die in the black hole. The issue is that the agents cannot implement the cautious movement strategy outlined in Section \ref{sec:preliminaries}.
This is due to the fact that in order to use this strategy, three agents must be co-located, but we consider a scattered initial configuration here. Thus, each agent does the cautious movement individually. The movement strategy used by each agent is \emph{`Mark and Move' - `Return' - `Delete and Move'}. Basically, when an agent $a_i$ decides to move through a port, it writes this information at its current node $u$ and moves through that port (if the corresponding edge is available). After reaching the new node $v$, if it survives ($v$ is not a black hole), it returns to $u$ (if the corresponding edge is available), removes the marked port information written at $u$, and finally safely moves to the new node $v$, provided the corresponding edge is available. Now, if the visited node $v$ is a black hole, then the information written on the previous node $u$ stays put. This is a warning that can be seen by some other agent that visits $u$. We call this type of movement the \underline{Individual Cautious Movement (ICM)}. The interesting part is that if two agents moved through the same port, say $p$, in different rounds, and have marked at the node $u$ about $p$, then the third agent that visits $u$ understands that port $p$ leads to a black hole. However, a single mark of information is not enough for the second agent to understand the black hole, as it might be the case that the first agent could not return to $u$ due to the unavailability of the corresponding edge. However, information about two marked ports is enough for the third agent to understand the presence of the back hole, since if it were due to the unavailability of the port, then the second agent could not have left through the same port $p$. More specifically, even if the corresponding edge became available only for one round and the second agent left in that round to $v$, then the first agent must be able to come back from $v$ in that round, and therefore only one marked information should be there at $u$ which is due to the second agent.

The above ICM strategy ensures that at most $2\delta_{BH}$ many agents can die in the black hole. Each individual agent runs its own DFS traversal unless it is stuck by a missing edge. Proceeding in this manner, let us suppose that $2\delta_{BH}$ many agents die in the black hole. Since each agent performs ICM, there are markings at each of the $\delta_{BH}$ nodes adjacent to the black hole. Thus, the problem is solved if at least one agent visits one of these $\delta_{BH}$ nodes. However, since there are only 17 agents left in the graph and none of them can die, the movement of these agents can only be blocked by the missing edge. 
Whenever a group of at least 9 agents is formed, they initiate the algorithm for the rooted case as described in \cite{BHS_gen} and outlined in Section \ref{sec:preliminaries}. A missing edge $(u,v)$ can block at most 16 agents, 8 agents at each of the nodes $u$ and $v$, while ensuring that no group of 9 agents is formed. Hence, in this case, at least one agent proceeds according to its DFS traversal in each round. Thus, it eventually visits one of the $\delta_{BH}$ nodes that have some sort of marked information by the two dead agents, leading to the solution of the 1-BHS problem.     

Now, our objective is to limit the storage at each node to $O(\log n)$ bits. Thus, information corresponding to only a constant number of agents can be stored at each node. We utilize the same idea as mentioned above to search for the black hole. The difference is to restrict the number of DFS traversals being run by the agents on the graph. This is done in the following manner. When an agent visits a node, it checks the ID corresponding to the information written on the whiteboard. If no such information is present, then it continues its traversal. However, if it finds that some smaller ID agent has already visited the current node, then it dismisses its own DFS traversal and follows the path followed by this smaller ID agent. But if the information corresponds to some larger ID agent than its own ID, then it continues its traversal by overwriting the information on the whiteboard. By proceeding according to this, either a group of 6 agents is formed eventually or at least one agent reaches one of the $\delta_{BH}$ nodes. In both scenarios, we have our solution to the problem of 1-BHS. This explains the overall idea of our algorithm. 

\subsection{The Algorithm: \textsc{Algo\_Bhs\_Scattered}}
 Before describing the algorithm in detail, we provide a description of the parameters maintained by each agent $a_i$.
\begin{itemize}
    \item $a_i.state$: This is a binary variable. It can take values either $explore$ or $backtrack$. It is initialized to $explore$.
    \item $a_i.success$: This is a binary variable that can take values either $True$ or $False$. This is required to identify whether the movement of $a_i$ in the previous round was successful or not. It is initially set to $True$.
    \item $a_i.pout$: It stores the port that will be used by the agent to exit from the node in the current round. It can take values from $\{-1, 0, 1, \ldots, \delta_v-1\}$, where $\delta_v$ is the degree of the current node $v$. This parameter is initially set to $-1$.
    \item $a_i.pin$: It stores the port used by the agent to enter into the node at the beginning of the current round. It can take values from $\{-1,0, \ldots, \delta_v-1\}$ where $\delta_v$ is the degree of the current node $v$. It is initially set to $-1$. 
    \item $a_i.ID$: This variable stores the ID of the agent.

    \item $a_i.grp$: If an agent is part of a group of at least 9 agents, it updates $a_i.grp$ to $True$. It is initialized to $False$.
     \item $a_i.grp\_ID$: If an agent is part of a group of at least 9 agents, it updates $a_i.grp\_ID$ to the smallest ID agent present in the group. It is initialized to $\bot$.

\end{itemize}

While traversing through the graph, agent $a_i$ also updates information on the whiteboard. The parameters stored on the whiteboard by the agent $a_i$ are as follows:
\begin{itemize}
    \item $wb_v(a_i.parent)$: This parameter stores the port at $v$ from which $a_i$ entered node $v$ for the first time while running its DFS. This can take value from $\{-1, 0, 1, \ldots, \delta_v-1\}$.
    
    \item $wb_v(a_i.recent)$: This parameter stores the last port at $v$ used by the agent $a_i$ to continue its traversal. It enables other agents, such as $a_j$, to track the traversal path of agent $a_i$.

    The information $wb_v(a_i.parent)$ and $wb_v(a_i.recent)$ together constitutes the DFS information corresponding to the agent $a_i$. For the sake of simplicity, we call this information the \underline{travel information}.
    
    \item $wb_v(marked_1)=(a_i.pout, a_i.ID)$: The agent $a_i$ writes its ID and its computed value of $pout$ that it uses to exit from node $v$. Initially, it is set to $\bot$. We denote the agent $a_i$ whose information is written at $wb_v(marked_1)$ with $A_1$.
    \item $wb_v(marked_2)=(a_i.pout, a_i.ID)$: The agent $a_i$ writes its ID and computed value of $pout$ that it uses to exit from $v$. When there is already some information written at $wb_v(marked_1)$, and some other agent wants to either move through the same port or any other port, it writes its information at $marked_2$. Note that the values of $recent$ port and $marked$ port may be the same for an agent $a_i$. However, we use these two parameters for two separate purposes in the algorithm. We denote the agent $a_i$ whose information is written at $wb_v(marked_2)$ with $A_2$.

    For the sake of simplicity, we call this information the \underline{marked port information}.
    \item $wb_v(grp)$: This parameter can take values either $True$ or $False$. It helps to identify whether a group of at least 9 agents is formed or not. Initially, it is set to $False$. Once the value is set to $True$, it never changes to $False$.
    \item $wb_v(grp\_ID)$: If a group consisting of at least 9 agents is formed, then whenever any agent $a_i$ from this group visits node $v$, it updates this parameter with the value of its $a_i.grp\_ID$.  
\end{itemize}

Let $2\delta_{BH}+17$ many agents be initially positioned arbitrarily at the safe nodes of $G$. If there is at least one group of at least $9$ agents in the initial configuration, then they can proceed with the rooted algorithm from \cite{BHS_gen}. The problem of 1-BHS is solved in this scenario. Hence, let us assume that there is no group of at least $9$ agents in the initial configuration. 

Our algorithm operates in alternating odd and even rounds. In odd rounds, agents perform a specific set of actions, while in even rounds, they execute a different set of operations. Let an agent $a_i$ be present at a node $v$ in round $t$. If $t\mod 2=1$, then $a_i$ updates its $a_i.success=True$ if its attempt to move in the round $t-1$ was successful. Otherwise, if its attempt to move in the round $t-1$ was unsuccessful, then it sets $a_i.success=False$. On the other hand, if $t\mod 2=0$, then  $a_i$ performs its movement based on the various cases as described below. Before proceeding with the specific cases, we describe ICM performed by  $a_i$.

\vspace{0.1cm}
\noindent\textbf{Individual Cautious Movement (ICM):} Let $a_i$ be present at $v$. Let us suppose that $a_i$ decides to exit from $v$ by performing ICM via port $p_i$ at round $t$ to reach the adjacent node, say $v'$. It writes its marked port information at $v$ and moves through the port $p_i$ at the end of round $t$. In case the move fails due to a missing edge, it removes the marked port information in the subsequent odd round and starts ICM freshly from the next even round. After successfully arriving at the adjacent node $v'$ at some round $t'\geq t$, $a_i$ (if alive) returns to $v$ and deletes its marked port information at $v$. After deleting the marked port information, $a_i$ finally moves through port $p_i$ to reach the adjacent node safely. Once $a_i$ reaches $v'$ safely, it is said to complete ICM. 
Now we are ready to provide the details of our algorithm.

 We first describe the case when an agent $a_i$ is not yet a part of a group of at least 9 agents, i.e., $a_i.grp=False$. In this case, we have the following four cases. 

\vspace{0.2cm}
\noindent (A) If $a_i$ is at node $v$ in $explore$ state after completing its ICM and $a_i.success=True$. 
In this scenario, we have the following two cases:

\begin{itemize}
    \item[A.1] \textit{There are no marked port information at the current node $v$}- This implies that the values of both $marked_1$ and $marked_2$ are $\bot$. Agents check if there is any travel information corresponding to $a_j$ such that $a_j.ID$ is smaller than all the agents present at the current node. If information corresponding to such $a_j$ is present, then $a_i$ moves through the port $a_j.recent$, where this value is extracted from the whiteboard. Otherwise, if there is no such travel information present at $v$, then $a_i$ proceeds as follows. If $a_i.ID$ is the smallest among all the agents at $v$, then $a_i$ proceeds with its own DFS traversal via ICM.\footnote{Note that by ``proceeding with DFS via ICM," we mean that the agent writes its traversal information at the current node according to the DFS traversal of $a_i$. Moreover, if the agent needs to explore from the current node, the use of ICM is required; however, if it needs to backtrack, ICM is not necessary as it is definite that the node is a safe node.} On the other hand, if $a_i.ID$ is not the smallest among all the agents at $v$\footnote{The comparison among agent IDs is performed only among those agents that are present at the current node after completing their ICM.}, then $a_i$ waits at the current node.
    
    \item[A.2] \textit{There is one marked port information at the current node $v$}- Without loss of generality, let $marked_1 \neq \bot$ and the information is written there corresponding to an agent $A_1$. If $A_1$ is not present at $v$, then $a_i$ proceeds based on the following cases. If $a_i.ID$ is smaller than all the agents at the current node, as well as smaller than $A_1.ID$, then $a_i$ continues its DFS via ICM. If $a_i.ID$ is the smallest among all the agents at the current node, but $a_i.ID>A_1.ID$, then $a_i$ sets $a_i.pout=A_1.pout$ (where $A_1.pout$ is extracted from $wb_v(marked_1)$), and attempts to move through this port by writing its information in $wb_v(marked_2)$. If $a_i$ is not the smallest ID agent at the current node, then it waits at the current node. 
    
    On the other hand, if $A_1$ is present at $v$ after verifying that the adjacent node it had to visit is a safe node (i.e., $A_1.success=True$ and $A_1$ has entered into $v$ via $A_1.recent$ port), then $a_i$ proceeds based on the following. If $A_1$ is the smallest ID agent from all the agents at $v$ and the ID of the agent whose travel information is written at $v$ (if any), then $a_i$ sets $a_i.pout=A_1.pin$ and moves through $a_i.pout$. Otherwise, if there is travel information at $v$ corresponding to an agent $a_j$ such that $a_j.ID$ is smaller than all the agents at $v$, then $a_i$ sets $a_i.pout=a_j.recent$ where the value of $a_j.recent$ is extracted from $wb_v$. Agent $a_i$ moves through its computed value of $a_i.pout$. If no such travel information corresponding to $a_j$ is found at $v$, then if $a_i$ is the smallest ID agent at $v$, it proceeds according to its DFS via ICM. Else, $a_i$ waits at the current node.

    \item[A.3] \textit{Both the marked port information are present at $v$}- This implies both $marked_1 \neq \bot$ and $marked_2 \neq \bot$. Let the information written in $marked_1$ correspond to $A_1$, and the information in $marked_2$ correspond to $A_2$. Here, we have three cases.
    \begin{itemize}
        \item If neither of $A_1$ and $A_2$ is present at $v$ (to delete their respective $marked$ port information), then $a_i$ checks if $marked_1=marked_2$, then black hole is found. Otherwise, if $marked_1\neq marked_2$, then $a_i$ waits at the current node. 
        \item If one of $A_1$ and $A_2$ is present at $v$ to delete its marked port information, then $a_i$ proceeds according to the following. Without loss of generality, let $A_1$ be present at $v$. If $A_1.ID$ is the minimum among all the agents at $v$ and the travel information present at $v$, then $a_i$ moves along with $A_1$ by setting $a_i.pout=A_1.pin$. If $A_1.ID>A_2.ID$ and $A_2.ID$ is smaller than all the agents present at the current node, then $a_i$ proceeds based on the following two cases, (a) If $a_i$ is the smallest ID agent among all the agents at the current node, then $a_i$ sets $a_i.pout$ same as the $marked_2$ (corresponding to $A_2$). Agent $a_i$ then writes its information in $marked_1$ and attempts to move through its port $a_i.pout$ by performing ICM; (b) If $a_i$ is not the smallest ID agent among all the agents at the current node, then $a_i$ waits at the current node $v$.

        Now, if $A_1.ID>A_2.ID$ and $A_2.ID$ is not the smallest among all the agents at the current node, then if $a_i$ is the smallest ID agent among all the agents at $v$, then it proceeds with its own DFS via ICM by writing its information in $marked_1$. Otherwise, $a_i$ waits at the current node. 

        \item If both $A_1$ and $A_2$ are present at $v$ to delete their marked port information, this means both the $marked$ port information will be deleted by the respective agents. This case thus reduces to case 1 when both the $marked$ port informations are $\bot$. 
    \end{itemize}
\end{itemize}
This completes the algorithm of an agent $a_i$ when it is present at a node in the $explore$ state after completing its ICM. 

\noindent (B) If $a_i$ is at node $v$ in $explore$ state after completing its ICM and $a_i.success=False$. For this scenario, we have the following two cases:
\begin{itemize}
    \item[B.1] \textit{There is no agent $a_j$ with $a_j.success=True$ and $a_j$ has completed its ICM}- This means no new agent has entered into the current node. Thus, $a_i$ continues its attempt to move through its computed $pout$ value.
    \item[B.2] \textit{There is at least one agent $a_j$ present at $v$ with $a_j.success=True$ and $a_j$ has completed its ICM}- In this case, if $a_i$ has the minimum ID among all agents at $v$, then it re-attempts to move through its $a_i.pout$. Otherwise, it waits at $v$.       
\end{itemize}
Note that, in this case, $marked$ port information checking is not required due to the fact that an agent first arrives at $v$ with $success=True$ and then writes its $marked$ port information at $v$. If there was already any $marked$ port information at $v$, then the decision was taken by $a_i$ based on that information according to the case (A) above. 

\vspace{0.2cm}
\noindent (C) If $a_i$ is at node $v$ with $a_i.state=backtrack$ and $a_i.success=True$. For this scenario, we have the following two cases.
    \begin{itemize}
        \item[C.1] There is no information related to $marked$ ports at the current node $v$- In this case, $a_i$ checks if there is travel information corresponding to an agent $a_j$ such that $a_j.ID<$ IDs of all agents at $v$. If such travel information is present at $v$, then $a_i$ sets $a_i.pout=a_j.recent$ where the value of $a_j.recent$ is extracted from $wb_v$. If such travel information is not present, then $a_i$ compares its ID with the IDs of other agents present at the current node. If $a_i.ID$ is the smallest among all the agents present at $v$, then $a_i$ continues its DFS via ICM. Otherwise, if $a_i$ is not the smallest ID agent present at $v$, then $a_i$ waits at the current node.  
        \item[C.2] There is one information related to $marked$ port at $v$- Without loss of generality, let the information be present in $marked_1$ and it corresponds to the agent $A_1$. In this scenario, we have two cases.
            \begin{itemize}
            \item If $A_1$ is present at $v$ to delete its marked port information at $v$, then $a_i$ checks if it is the smallest ID agent among all the agents at $v$. If yes, then $a_i$ continues its own DFS. Otherwise, if $a_i$ is not the smallest ID agent among all the agents at $v$, then it waits at the current node.
            \item If $A_1$ is not present at $v$, then $a_i$ checks if it is the smallest ID agent among all agents at $v$ and smaller than $A_1.ID$ as well, then $a_i$ continues its DFS via ICM. Otherwise, if $a_i.ID>A_1.ID$ and $a_i.ID$ is smaller than all agents at $v$, then $a_i$ stops its DFS, sets $a_i.pout=A_1.pout$ (value of $A_1.pout$ is extracted from $wb_v(marked_1)$), and attempts to move through the port $a_i.pout$ by performing ICM and writing its information in $marked_2$. Otherwise, if $a_i$ is not the smallest ID agent, then $a_i$ waits at $v$. 
            \end{itemize}
        \item[C.3] Both the $marked$ port information are present at $v$- Let the information of $marked_1$ correspond to $A_1$ and the information of $marked_2$ correspond to $A_2$. We have the following three subcases.
            \begin{itemize}
            \item If neither of $A_1$ or $A_2$ is present at $v$ to delete their marked port information, then $a_i$ waits at $v$ if $marked_1 \neq marked_2$. If $marked_1=marked_2$, then black hole is found.
            \item If one of $A_1$ and $A_2$ is present at $v$ to delete its marked port information, then $a_i$ decides based on the following scenarios. Without loss of generality, let $A_1$ be present at $v$. If $A_1.ID$ is the minimum from all agents at $v$, including the IDs of the agents written at travel information at $v$ (if any), then $a_i$ sets $a_i.pout=A_1.pout$ (where $A_1.pout$ is extracted from $wb_v(marked_1)$) and moves along with $A_1$. Otherwise, if $A_2.ID$ is the smallest among all agents at $v$, then $a_i$ moves through $A_2.pout$ (where the value of $A_2.pout$ is extracted from $wb_v(marked_2)$) by writing its information in $marked_1$ if $a_i$ is the smallest ID agent among all the agents at $v$. If $a_i$ is not the smallest ID agent at $v$, then it waits at the current node. The final case is if there is another agent $a_j$ at $v$ after completing its ICM that is smaller than all the agents at $v$, including the travel information, then $a_i$ stops the execution of its DFS and waits at the current node $v$. If $a_i$ is the smallest ID agent at $v$ then it performs its DFS via ICM by writing the marked port information at $marked_1$.
            \item If both $A_1$ and $A_2$ are present at $v$ to delete their marked port information, and either of them is the minimum from all the agents present at the current node, as well as the travel information, then $a_i$ moves along with it. Otherwise, if $a_i$ is the smallest ID agent present at $v$, then it continues its DFS via ICM. Otherwise, $a_i$ waits at the current node. 
            \end{itemize}
\end{itemize}

\noindent (D) If $a_i$ is at node $v$ with $a_i.success=False$ and $a_i.state=backtrack$: In this case, $a_i$ continues to execute its own DFS if there is no new agent with $success=True$ after completing its ICM present at $v$ with ID smaller than that of $a_i$. However, if there is such an agent present, then $a_i$ waits at the current node.

\vspace{0.3cm}\noindent (II) Algorithm of agent $a_i$ when it is part of a group with at least 9 agents ($a_i.grp=True$): At a round, say $t$, during the execution of our algorithm, at least $9$ agents (say $x$ many agents) are present at a node $v$. From these $x$ agents, let $p$ many agents be such that they have to return to their previous nodes to delete their $marked$ port information on their respective nodes (i.e., they have not yet completed their ICM). If $x-p\geq 9$, then there is a group of at least $9$ agents that have completed their ICM. These agents initiate the Rooted Algorithm as described in \cite{BHS_gen}. Each agent within this group sets $a_i.grp=True$ and, while writing its travel information at any visited node $v$, it includes this information on the whiteboard by writing $wb_v(grp)=True$. When another agent, say $a_j$ with $a_j.grp=False$ (indicating that it has not started the rooted algorithm and thus is not a part of any group that is executing the rooted algorithm) visits such a node where information is written corresponding to an agent that has $grp=True$ value, $a_j$ understands that a group of 9 agents has initiated the rooted algorithm. The inclusion of this information serves an important purpose: if an agent that is either alone or part of a group with less than $9$ agents encounters a node containing this information (travel information corresponding to an agent $a_i$ with $a_i.grp=True$), it terminates its execution of the algorithm, as it is definite that the group of at least $9$ agents will solve the 1-BHS problem. It may be the case that more than one group of 9 agents are formed that have initiated the rooted algorithm to solve the problem of 1-BHS. Since each node has $O(\log n)$ bits of memory, the information corresponding to each of these groups cannot be stored at the whiteboard. Thus, to resolve this issue, when a group of 9 agents is formed, each agent $a_i$ that is a part of the group stores the ID of the agent that has the smallest ID among the group in the parameter $a_i.grp\_ID$. When these agents update their information on the whiteboard (as per the rooted algorithm of 1-BHS), they also update the value of $grp\_ID$ on the whiteboard of the visited node. When a group (say $G_1$) visits a node where information corresponding to another group (say $G_2$) is written, such that $grp\_ID$ of the agents in $G_1$ is less than the $grp\_ID$ of the agents in $G_2$, then $G_2$ stops its execution of its algorithm. 

Now suppose that $x-p<9$. In this case, all agents at the current node understand that a group of 9 agents can be formed. Thus, these $x-p$ agents remain at current node. The remaining $p$ agents move to their previous nodes (say at round $t_1$) in order to delete their respective $marked$ port information. Since at most one edge can disappear, at most one agent from these $p$ many agents may not be able to move from the current node to its previous node to delete its $marked$ port information. Even if all the $p$ agents successfully move to their previous nodes, at most one agent may not be able to return to the current node. Thus, the $x-p$ agents wait at the current node unless a group of $9$ agents is formed at the current node, consisting of all the agents that have completed their ICM. This completes the description of \textsc{Algo\_Bhs\_Scattered}. The pseudo-codes are added in the Appendix B. Let's proceed with the correctness and analysis of \textsc{Algo\_Bhs\_Scattered}.

\subsection{Correctness and Analysis}
In this section, we show the correctness of \textsc{Algo\_Bhs\_Scattered}. Let the total number of agents in the graph be $l$, i.e., $l=2\delta_{BH}+17$. In the analysis, we denote the $i^{th}$ smallest ID agent with $\alpha_i$. Hence, $\alpha_1.ID<\alpha_2.ID<\ldots<\alpha_l.ID$. As per our algorithm, if agent $a_i$ meets $a_j$ or it sees the information written by some agent $a_j$, and $a_i.ID>a_j.ID$, then agent $a_i$ stops its own DFS and replicates the movement of agent $a_j$. Based on this, we have the following observations. 

\begin{lemma}\label{lem:atmost_travel_info}
    \textsc{Algo\_Bhs\_Scattered} never instructs more than one agent to write travel information on the whiteboard of any node at any round.
\end{lemma}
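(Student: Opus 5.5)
The plan is to argue by a case analysis over the rules of \textsc{Algo\_Bhs\_Scattered}: I will show that at every node $v$ and every round $t$, the set of agents the algorithm allows to write travel information at $v$ in round $t$ has at most one element. First observe where writing actually happens. Travel information ($wb_v(a_i.parent)$ and $wb_v(a_i.recent)$) is written only when an agent ``proceeds with its own DFS'' (possibly via ICM), or, for grouped agents, when it executes the rooted algorithm of \cite{BHS_gen}. Agents that follow another agent's path — setting $pout$ to $a_j.recent$, to $A_1.pin$, or to a marked port — only copy a port value already on the whiteboard and never write their own travel information. Waiting agents write nothing. Hence it suffices to show that, in every branch of cases (A)--(D) and (II), at most one co-located agent is instructed to continue its own DFS at $v$ in a given round.

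Next I go through the non-group cases one by one and record the guard attached to every ``continue its DFS'' instruction. In A.1, A.2, A.3, C.1, C.2 and C.3, the instruction is always conditioned on $a_i.ID$ being the smallest among all agents currently at $v$ that have completed their ICM, and often also on being smaller than the relevant marked or travel IDs. Because IDs are distinct, at most one agent satisfies ``smallest among those present'', so at most one agent writes in these cases. In B.1 and D, a stuck agent re-attempts its move. I will argue that B.1 only re-tries a port and writes no new travel information. For D and B.2, the agent keeps its DFS only if no newly arrived agent with $success=True$ has a smaller ID. I then need to check that a stuck agent $a_i$ and a newly arrived agent $a_k$ cannot both write. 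If $a_k.ID<a_i.ID$, then $a_i$ waits. If $a_k.ID>a_i.ID$, then $a_k$ is not the minimum at $v$, so $a_k$ waits or follows. Finally, two stuck agents at the same node: the second one arrived while the first was already present, and by the minimality rule at its arrival it did not take its own DFS step there unless it was the minimum. So their DFS positions at $v$ were never both ``own'' at the same time.

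For grouped agents in (II), the rooted algorithm moves the group as a single unit, and only the group's representative writes its travel information together with $grp$ and $grp\_ID$. If two groups meet, the one with larger $grp\_ID$ stops, so again there is one writer. Agents with $grp=False$ that see group information terminate and write nothing.

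The main obstacle is the B/D cases. There, agents at $v$ with $success=False$ are not included in the ``completed ICM'' comparisons, so I must rule out a simultaneous write by a stuck agent and a fresh minimum. I would handle this by an induction on rounds. The invariant is: at each node, the agents that currently hold their own DFS position there form a set of size at most one. Each rule either preserves this invariant or transfers ownership to a smaller ID, which forces the previous owner into wait/follow mode. This induction, together with the distinctness of IDs, yields the lemma.
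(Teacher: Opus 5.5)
Your proposal is correct and uses the same idea as the paper's proof. Every instruction to write travel information at $v$ is guarded by the writer having the smallest ID among the agents currently at $v$. If travel information of some $a_j$ is already present, the writer must also have a smaller ID than $a_j$, in which case it overwrites, while larger-ID agents replicate $a_j$'s movement; distinct IDs therefore leave at most one writer per node per round. Your separate treatment of the stuck-agent cases (B) and (D), of grouped agents, and your ownership invariant go beyond what the paper writes, since its proof argues only from the smallest-ID rule and this overwrite-or-follow behaviour.
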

\begin{proof}
If there is no travel information written at a node, say $v$, at the beginning of a round $t$, then in round $t$, only the agent with the smallest ID among all agents currently present at $v$ is instructed to write its travel information on the whiteboard, as per our algorithm. If travel information corresponding to some agent $a_j$ is already written at $v$, then any agent visiting $v$ with an ID smaller than that of $a_j$ and also being the agent with the smallest ID among those currently at $v$ will overwrite the existing travel information. Otherwise, the agent replicates the movement of $a_j$, and the travel information at $v$ remains unchanged. Hence, the statement follows.
\end{proof}

\begin{lemma}\label{lem:atmost_marked_info}
As per \textsc{Algo\_Bhs\_Scattered}, an agent writes its marked port information on the whiteboard only if at least one of its marked ports is $\bot$.
\end{lemma}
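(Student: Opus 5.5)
The lemma is about the algorithm as written, so I would prove it by inspecting every instruction of \textsc{Algo\_Bhs\_Scattered} that causes an agent to write marked port information, and checking that each fires only when one of the two slots $wb_v(marked_1)$, $wb_v(marked_2)$ at $v$ is $\bot$. Marked port information is written only when an agent starts an ICM from its current node $v$. That happens only in cases (A) and (C) for $a_i.grp=False$. Case (B) and case (D) only re-attempt a move: the marked information written earlier is still in place, or was deleted in the intervening odd round and is rewritten under the conditions of case (A). Part (II) writes travel and group information, not marked port information. So the cases to check are A.1--A.3 and C.1--C.3.

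I would organise the check by how many marked slots are non-$\bot$ at the start of the round.
\begin{itemize}
\item \textbf{No slot filled} (A.1, C.1). Both slots are $\bot$, so the claim holds trivially; the agent writes into $marked_1$.
\item \textbf{One slot filled} (A.2, C.2). By construction exactly one slot is filled, say $marked_1$ belonging to $A_1$, and the writing agent uses the free slot $marked_2$. This holds both when it copies $A_1.pout$ and when it runs its own DFS via ICM.
\item \textbf{Both slots filled} (A.3, C.3). Here I would show that every sub-branch that writes does so only because one of $A_1,A_2$ is present at $v$ to delete its entry. The agent writes into the slot that agent is vacating, which is $\bot$ by the time of writing. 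When both $A_1$ and $A_2$ are present, the paper reduces the situation to the no-slot case. When neither is present, the agent either waits or declares the black hole found, and writes nothing.
\end{itemize}

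The main obstacle is the both-slots-filled case, specifically the timing inside one round. I must argue that the deleting agent clears its entry before, or in the same compute phase as, the new agent writing into that slot. For this I would use the odd/even round structure. An agent returning to delete its mark does so in the round it is present at $v$, and the new writer's decision in that round is conditioned on that presence. So in the consistent whiteboard state of that round, the slot used is $\bot$.

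I also need that two agents at $v$ never write into the same free slot in one round. This follows because only the smallest-ID agent among those present (after completing their ICM) is instructed to start an ICM, as in the proof of Lemma~\ref{lem:atmost_travel_info}. Combining the cases gives the lemma.
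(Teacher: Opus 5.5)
Your proposal is correct and follows the paper's proof: you split on whether zero, one, or two marked slots at $v$ are occupied (cases A.1/C.1, A.2/C.2, A.3/C.3), use the rule that only the smallest-ID agent present starts an ICM so at most one agent writes, and observe that with both slots full a write happens only when one owner is present to delete its entry, while otherwise all agents wait or declare the black hole. Your extra remarks on cases (B), (D), (II) and on ordering deletion before rewriting within a round go slightly beyond the paper, which simply asserts that one entry is deleted, but they refine rather than change the argument.
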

\begin{proof}

If there is no marked port information at a node $v$, and more than one agents are present at $v$, then exactly one of the agents writes the marked port information at the whiteboard. This is because only the smallest ID agent from all the agents at the current node proceeds as per its DFS algorithm. The remaining agents replicate the movement of the smallest ID agent cautiously (refer to cases A.1 and C.1).

 Now consider the case where one marked port information already exists at node $v$, and multiple agents are present. Let the existing information correspond to agent $a_i$. If $a_i$ is present at $v$ and has verified that the adjacent node through the marked port is safe, then $a_i$ removes its marked port information from $v$. Consequently, in the current round $t$, the agent with the smallest ID among those at $v$ writes its own marked port information, ensuring that only one such entry remains at $v$ at the end of round $t$. On the other hand, if $a_i$ is not present at $v$, then its marked port information remains intact. In this case, the agent with the smallest ID at $v$ either continues its DFS or replicates the movement of $a_i$, and writes its own marked port information. As a result, $v$ contains two marked port entries at the end of round $t$ (refer to cases A.2 and C.2). 

Finally, consider the case where two marked port information are already present at $v$. If at least one of the agents corresponding to these marked port information is present at $v$, then at least one of the two marked port information is deleted. On the other hand, if none of the agents whose marked port information is written at $v$ are present, then no agent proceeds further; instead, all agents at $v$ wait at the current node (refer to cases A.3 and C.3). Hence, our statement follows.
\end{proof}
    
\begin{observation}\label{obs:towardsBH}
    Suppose a node $v$ has non-empty information on both of its marked ports at round $t$. If, in the next even round following $t$, none of the agents associated with these information return, then it must be the case that one of the two marked ports at $v$ leads to the black hole.
\end{observation}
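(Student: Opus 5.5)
I would argue by contradiction and track the lifetimes of the two entries at $v$. Let $A_1$ and $A_2$ be the agents whose information occupies $marked_1$ and $marked_2$ at round $t$. By Lemma~\ref{lem:atmost_marked_info}, at most two marked entries coexist at $v$. An entry is created only by an agent that is about to start ICM through the recorded port. The entry is removed only by its owner: either in the odd round after a failed move, or after the owner returns from the neighbour. So if an entry is still present, its owner either has not yet managed to leave, or has left through the recorded port and not yet come back. First I would dispose of the ``not yet left'' case. A failed move is detected at the next odd round and the entry is deleted there, with ICM restarting afterwards. Hence an entry that survives to the next even round with its owner absent from $v$ belongs to an agent that actually crossed to the neighbour $w_k$ behind its port $p_k$, for $k\in\{1,2\}$.

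Next I would suppose neither $p_1$ nor $p_2$ leads to the black hole, so both $w_1$ and $w_2$ are safe and both agents are alive there. Each wants to return through the reverse edge $(w_k,v)$ in every even round. The key step is the 1-bounded property: in any single round at most one footprint edge is missing. If $p_1\neq p_2$, the edges $(v,w_1)$ and $(v,w_2)$ are distinct, so at least one is present in the next even round. The corresponding agent then returns, contradicting the hypothesis. If $p_1=p_2$, I would use the timing argument from the high-level description. The second agent crossed $(v,w)$ in some round $t_2$, which requires the edge to be present at $t_2$. The first agent was already at $w$ and trying to return every even round, so it would have returned in the same round $t_2$ and removed its entry. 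Since only the second entry would then remain, this contradicts two entries being present at round $t$. Some care is needed with the odd/even alternation: moves happen only in even rounds, and writes happen before moving. So I would check that the first agent's return attempt coincides with the second agent's crossing round.

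The main obstacle is the bookkeeping in that same-port case, together with the requirement that an agent at $w_k$ is actually attempting the return. It might instead be waiting, since returning to delete the mark has priority in the ICM description. I would first fix the invariant that an agent that has crossed but not completed its ICM always attempts the return edge in every even round. With that invariant, the conclusion follows: in the next even round after $t$ one of the two edges is present (or both coincide and the argument above applies). Hence a non-returning owner must have been destroyed, i.e., its marked port leads to the black hole.
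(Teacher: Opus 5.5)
Your proof is correct, and its core matches the paper's reasoning. The paper states this as an observation with no formal proof, and its only justification, in the high-level description of ICM, is exactly your same-port timing argument: the earlier owner, waiting at $w$, must cross back in the very round the later owner crosses, so both entries cannot persist with both owners away unless the earlier owner is dead.

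Your distinct-port case fills in what the paper leaves implicit: two different edges at $v$ cannot both be missing in one round under 1-boundedness. One small fix in the same-port case: the contradiction should be with the earlier owner's absence from $v$, since it is back at $v$ from round $t_2+1$, rather than with the two entries coexisting. This is also how cases A.3/C.3 interpret ``return''.
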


\begin{lemma}\label{obs:min_move}
    The agent $\alpha_1$ never deviates from its original DFS path unless it becomes part of a group comprising at least 9 agents. The agent $\alpha_1$ may encounter one of the following three scenarios: (i) it dies upon entering the black hole, (ii) it gets stuck at a node due to a missing edge, or (iii) it waits at a node if both marked ports at that node are different from $\bot$.
\end{lemma}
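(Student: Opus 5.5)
The argument is a case analysis over the rules of \textsc{Algo\_Bhs\_Scattered}. The only fact it uses is that $\alpha_1$ has the globally smallest ID, so every ID comparison it makes — against agents at its node, against travel information, or against $A_1,A_2$ in the marked ports — comes out in its favour.

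\textbf{Step 1: no rule redirects $\alpha_1$.} Every instruction that makes an agent leave its own DFS is guarded by a condition of one of two forms:
\begin{itemize}
\item some other agent $a_j$ (present at the node, or recorded in travel information or a marked port) has a smaller ID than the current agent; or
\item the current agent is not the smallest ID agent at the node.
\end{itemize}
Examples are following $a_j.recent$ in A.1 and C.1, copying $A_1.pout$ in A.2 and C.2, moving with $A_1$ or through $A_2.pout$ in A.3 and C.3, and waiting in B.2 and D. For $\alpha_1$ both forms of guard are false. So in every branch of (A)--(D) where a minimum-ID test is made, $\alpha_1$ takes the branch ``continue its own DFS via ICM.''

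There is one subtlety. In A.2 and C.3 the agent in $marked_1$ may be present and the rule says ``move with $A_1$.'' This applies to $\alpha_1$ only when $A_1=\alpha_1$, in which case it is just $\alpha_1$ finishing its own ICM. In A.2, if $\alpha_1$ finds a marked entry from another agent $A_1$ with $\alpha_1.ID<A_1.ID$, the rule lets $\alpha_1$ continue its own DFS. By Lemma~\ref{lem:atmost_marked_info} it then writes into the free marked slot.

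\textbf{Step 2: travel information is never lost.} By Lemma~\ref{lem:atmost_travel_info}, at most one agent writes travel information at a node in any round. An agent overwrites existing travel information only if its ID is smaller than the recorded one, so no one can overwrite $\alpha_1$'s entries. Hence $\alpha_1$'s $parent$ and $recent$ ports stay intact, and its DFS is well defined and continues from exactly where it left off. The only remaining way $\alpha_1$ changes behaviour is the group rule (II), which is the exception stated in the lemma.

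\textbf{Step 3: the three scenarios are exhaustive.} The rounds in which $\alpha_1$ does not advance along its DFS are exactly the following:
\begin{itemize}
\item an ICM step into $v_{BH}$, which is scenario (i);
\item a failed move, handled by B.1 and D, where it retries the same port, which is scenario (ii);
\item the only branches in which even the minimum-ID agent waits, namely the first bullets of A.3 and C.3, where both marked ports are non-$\bot$, their owners are absent and $marked_1\neq marked_2$. This is scenario (iii). If instead $marked_1=marked_2$, the black hole is found and the run terminates.
\end{itemize}

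The main obstacle is the C.3 middle bullet. There, the wording ``if $A_2.ID$ is the smallest\ldots\ moves through $A_2.pout$'' has to be read with $\alpha_1$ being the smallest, so the clause falls through to ``performs its DFS via ICM.'' I would handle this by checking that every branch in A.3 and C.3 where $\alpha_1$ is not the owner of a marked entry resolves either to waiting (scenario (iii)) or to its own DFS.
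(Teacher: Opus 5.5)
Your proposal is correct and follows the same idea as the paper's proof: since $\alpha_1$ has the globally smallest ID, no ``follow a smaller agent'' or ``wait for a smaller agent'' guard can fire for it, so it only runs its own DFS, and that DFS is interrupted only by death in the black hole, a missing edge, or two occupied marked slots whose owners are absent. Your branch-by-branch check of (A)--(D), and your observation that no agent can overwrite $\alpha_1$'s travel information, spell out what the paper's short proof takes for granted; both arguments assign all of rule (II) to the group exception in the same way.
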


\begin{proof}
    Since $\alpha_1$ is the agent with the smallest ID, it never follows the DFS traversal of any other agent. Therefore, it continues its own DFS traversal unless it becomes part of a group of 9 agents. During its traversal, $\alpha_1$ may enter the black hole, leading to case (i). Alternatively, it may get stuck at a node due to a missing edge, which corresponds to case (ii). The final possibility arises when $\alpha_1$ is at a node where the whiteboard contains information about both marked ports, each corresponding to another agent. By Lemma \ref{lem:atmost_marked_info}, Lemma \ref{lem:atmost_travel_info}, and Observation \ref{obs:towardsBH}, at most two marked port entries can be present on any node's whiteboard. When both entries are occupied, $\alpha_1$ cannot record its own marked port information and thus waits at the current node, satisfying case (iii). Note that even if $\alpha_1$ is later able to record its own information at such a node in a subsequent round, it continues its DFS traversal without deviation from its original DFS path. This completes the proof.
\end{proof}

\begin{observation}\label{lem:atmost16}
    If the movement of more than 16 agents is blocked due to a missing edge, then a group of 9 agents is definitely formed.
\end{observation}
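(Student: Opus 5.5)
The plan is a pigeonhole argument on the two endpoints of the missing edge, combined with the 1-bounded property. By 1-bounded 1-interval connectivity, in any round at most one footprint edge $e=(u,v)$ is absent. An agent's move can be blocked in a round only if it tries to traverse the missing edge. Hence every agent whose move is blocked in that round is located at $u$ or at $v$, trying to cross $e$ from its side. I would first fix a round $t$ in which more than $16$ agents are blocked. From the single-missing-edge property I would conclude that all of them sit at $u$ or $v$ at the end of round $t$.

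Next comes the pigeonhole step. If more than $16$ agents are split between two nodes, one node holds at least $\lceil 17/2\rceil = 9$ of them. So at least $9$ agents are co-located at, say, $u$. I would also check that they stay co-located. A blocked agent remains at its current node; in case (B), or in case (D) after a failure, it keeps reattempting the same port. So all these agents remain at $u$ as long as $e$ is missing, and in particular at the start of the next round.

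The remaining step, and the main obstacle, is to show that this co-location really triggers group formation under rule (II). That rule counts only agents that have \emph{completed} their ICM, i.e.\ $x-p\ge 9$. Some of the blocked agents at $u$ may be agents still returning through $e$ to delete a marked port on the other side, and those do not count. I would handle this as follows.

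\begin{itemize}
\item Agents blocked while returning from $u$ to $v$ to finish an ICM are trying to reach $v$. In the count at $u$ they belong to the $p$ agents.
\item By the description of (II), when $x\ge 9$ but $x-p<9$, the $x-p$ completed agents wait. The $p$ others go to delete their marks and come back. Since at most one edge is missing at any time, at most one of these returning agents can be held up permanently.
\item So I would either strengthen the count (e.g.\ argue that blocked agents attempting a \emph{fresh} ICM move or DFS move from $u$ have already completed their ICM at $u$, and only those are blocked from $u$), or use the waiting mechanism of (II) to argue that eventually $9$ completed agents gather at $u$ or $v$.
\end{itemize}

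I would try the first option first. An agent is blocked at $u$ while attempting to move \emph{from} $u$ either as the initial step of an ICM toward $v$, or as a DFS or backtrack move. In both cases it arrived at $u$ with its ICM finished, so it counts toward $x-p$. Returning agents at $u$ would be trying to go back to some other node, which is not through $e$ unless their marked node is $v$. I would treat that one configuration separately via the ``at most one agent fails'' remark in (II).
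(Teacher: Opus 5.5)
Your pigeonhole step is all the paper has. One absent edge $e=(u,v)$ means every blocked agent is at $u$ or $v$, so at least $\lceil 17/2\rceil=9$ are at one endpoint, and they stay there. The paper gives no separate proof. It relies only on the remark in the high-level idea that a missing edge can hold back at most $8$ agents at each endpoint without $9$ becoming co-located. It therefore reads ``a group of $9$ is formed'' as $9$ agents co-located at $u$ or $v$, and leaves the ICM bookkeeping to rule (II). Under that reading, your first two paragraphs already finish the proof.

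The extra step you add to make rule (II) fire has a real gap.

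\emph{Incomplete case split.} Your list of moves from $u$ across $e$ misses the third leg of an ICM started at $u$. An agent that came back to $u$, erased its mark and is re-crossing $e$ did not arrive at $u$ with its ICM finished.

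\emph{The ``at most one agent fails'' remark cannot help.} It cannot produce a ninth completed agent, because it concedes exactly that a returner may be unable to cross. The adversary may also keep a non-bridge edge $e$ absent forever, so a returner stuck at $u$ with its mark at $v$ never completes. For instance, take $8$ completed agents plus such a returner at $u$. At $v$, take $8$ agents waiting behind one that keeps retrying $e$ (e.g.\ following the returner's mark via case A.2). These are $17$ agents held up forever, while (II) merely makes the $8$ at $u$ wait with $x-p=8$. So if ``blocked'' covers agents held up in this way, the strong version you are aiming for is false at threshold $16$.

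\emph{What does work.} Use your literal notion: an agent is blocked in round $t$ if it attempts to cross $e$ in round $t$ and fails. Then the order of evaluation in (II) closes the argument. The $9$ agents blocked at $u$ in round $t$ were at $u$ when (II) was evaluated in that round. So either they already belong to a group, or (II) applied with $x\ge 9$. If $x-p<9$, then (II) keeps the $x-p$ agents in place and only the $p$ returners move. Those crossing $e$ have their marks at $v$, so there are at most two of them by Lemma~\ref{lem:atmost_marked_info}, contradicting $9$ blocked at $u$. Hence $x-p\ge 9$ and the group forms in that very round. No appeal to the returners' eventual success is needed.
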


\begin{remark}
    By \underline{completing an agent's movement}, we mean that either the agent died in the black hole or the agent determined the location of the black hole by reaching a neighbor $v$ of the black hole and outputting the correct port that leads from $v$ to the black hole. 
\end{remark}

\begin{lemma}\label{lem:12lm}
    An agent either dies in the black hole or finds the black hole by performing at most $12lm$ successful movements. 
\end{lemma}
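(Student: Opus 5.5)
Since the statement counts only successful movements, I will ignore rounds in which an agent waits or is blocked by a missing edge, and bound the successful movements an agent can make before it dies or locates the black hole. The plan is to charge each successful movement to one edge traversal of some DFS, and then bound the number of DFS traversals an agent can follow.

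First, a single DFS. A DFS on the footprint $G$ traverses each edge at most twice, once while exploring and once while backtracking, so it makes at most $2m$ edge traversals. Under ICM each exploring traversal of an edge $(u,v)$ costs at most three successful moves: $u\to v$, then $v\to u$ to erase the mark, then $u\to v$ again. A failed attempt erases its mark and restarts, but it is not a successful move, so it costs nothing. Backtracking moves are plain single moves. Hence one complete DFS by one agent, including ICM overhead, uses at most $3\cdot 2m=6m$ successful moves. A full DFS also visits every node, so if it runs to completion without the agent dying, the agent must at some point stand at a neighbour of the black hole.

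Second, I will bound how many times an agent can switch the traversal it follows. By Lemma~\ref{lem:atmost_travel_info}, and by the rule that an agent abandons its own DFS only to replicate an agent of strictly smaller ID, the ID of the agent whose traversal $a_i$ is following strictly decreases at each switch. It can therefore switch at most $l-1$ times. The one other kind of switch is joining a group of at least $9$ agents and starting the rooted algorithm of \cite{BHS_gen}. That can happen once; the appearance of a smaller $grp\_ID$ stops a group rather than restarting it. The rooted algorithm's own successful moves I will bound by $O(m)$ per agent from the exploration bound in \cite{BHS_gen}, treating it as roughly one more DFS-like segment of cost at most $6m$.

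Each segment between two switches is part of at most one DFS, so it contributes at most $6m$ moves. There are at most $l+1$ segments, and a switch may also force a repeat of part of a path: the agent returns along the replicated agent's path, or finishes a pending ICM return before the group starts. Charging up to twice $6m$ per segment gives a total below $2\cdot 6m\cdot l=12lm$. This slack factor of two is how I expect to reach the stated constant $12$. If a segment finishes a full DFS, the agent has reached a neighbour of the black hole. There it either reads two equal marks and outputs the port (cases A.3 and C.3, via Observation~\ref{obs:towardsBH}), or it enters the black hole itself and dies.

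The main obstacle is the step that charges moves to a DFS when an agent replicates another agent's traversal. The followed agent's DFS may itself be only partially complete, or already abandoned, and the follower reads $recent$ pointers that can later be overwritten. I would handle this by observing that the follower always moves along the $recent$ port of the currently smallest visible ID. Each of its moves therefore corresponds to an edge step of that smallest agent's DFS, which traverses each edge at most twice. So the follower makes at most $6m$ moves per distinct leader, and since leader IDs strictly decrease there are at most $l$ distinct leaders.
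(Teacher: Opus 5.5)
Your skeleton matches the paper's: bound the successful moves of one DFS with ICM, then multiply by the number of distinct traversals an agent can follow. That number is at most $l$ because the followed ID strictly decreases and $\alpha_1$ never deviates (Lemma~\ref{obs:min_move}). However, the accounting that is supposed to produce $12lm$ does not hold up.

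The per-DFS count is wrong. The whiteboard holds only $parent$ and $recent$ in $O(\log n)$ bits, so ports are tried in order and a node cannot record which of its ports were already used from the other side. A non-tree edge $(u,v)$ is therefore tried from $u$ (move to the already visited $v$ and come back) and tried again from $v$ when the DFS later resumes there. That is up to four traversals per edge, which is why the paper uses $4m$. With ICM on exploring moves and plain single moves for returns, a non-tree edge costs $2(3+1)=8$ successful moves and a tree edge costs $4$. A complete DFS thus costs $4(n-1)+8(m-n+1)=8m-4(n-1)$, which exceeds your $6m$ as soon as $m>2(n-1)$.

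Your final arithmetic also does not close. Charging $2\cdot 6m$ to each of $l+1$ segments gives $12m(l+1)>12lm$. The factor two is not attached to any step of the algorithm either: a change of leader involves no retracing. Once its ICM is complete, the agent simply starts following the smaller agent's $recent$ port from its current node.

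The extra segment for the rooted algorithm of \cite{BHS_gen} is the other unsupported step. Nothing available bounds an agent's successful moves there by $O(m)$, let alone by $6m$; only an $O(m^2)$ round bound is known. The paper does not need this segment, because the lemma is used only in Case~(i) of Theorem~\ref{lm:time+correctness}, where no group of $9$ agents forms, and the group case is bounded separately there.

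Restricting to that setting leaves at most $l$ traversals: $\alpha_i$ runs its own DFS and then follows at most its $i-1$ smaller-ID predecessors, in decreasing order of ID. Each traversal costs at most $3\cdot 4m=12m$ successful moves, which gives $12lm$. This is exactly the paper's argument, and the paper's $12m$ per DFS (applying the factor $3$ to all $4m$ moves) already dominates the true cost $8m-4(n-1)$.
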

\begin{proof}
An agent requires at most $4m$ successful movements to complete its DFS traversal. Each movement in the DFS is executed using a cautious movement strategy, which consists of 3 individual movements. Therefore, an agent completes its DFS traversal, including the cautious movement strategy, in at most $12m$ successful movements. According to our algorithm, an agent $\alpha_i$ may need to perform its traversal up to $i$ times. This is because there are $i-1$ agents with smaller IDs than $\alpha_i$, each executing their own DFS traversals. In the worst case, agent $\alpha_i$ may sequentially follow the traversals of all these preceding agents before eventually following $\alpha_1$. By Observation~\ref{obs:min_move}, $\alpha_1$ never follows any other agent. Consequently, in the worst case, agent $\alpha_l$ may require at most $12m \cdot l$ successful movements to complete its traversal. This concludes the proof of the lemma.
\end{proof}

\begin{remark}\label{rem:24lm}
    Since agents perform their movements only during even rounds in our algorithm, an agent either dies in the black hole or identifies the black hole within at most $24lm$ rounds, where each round corresponds to a successful movement attempt.
\end{remark}

\begin{theorem}\label{lm:time+correctness}
    \textsc{Algo\_Bhs\_Scattered} solves 1-BHS in $O(m^2)$ rounds.
\end{theorem}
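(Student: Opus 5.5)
We split the argument into a correctness part and a time part, with a case split on whether a group of at least $9$ agents that have completed their ICM is ever formed.

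\textbf{Case 1: a group forms.} Suppose at some round a group of at least $9$ such agents forms at a node. These agents switch to the rooted algorithm of \cite{BHS_gen}, which solves 1-BHS in $O(m^2)$ rounds. Two points need checking.

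First, the rooted algorithm must not be disturbed by the scattered agents that remain. An agent with $grp=False$ that reaches a node whose whiteboard has $grp=True$ terminates. The only remaining interference is through node storage, and the $grp\_ID$ rule resolves it: among several groups, the one with the minimum $grp\_ID$ is never stopped. We therefore apply the theorem of \cite{BHS_gen} to that group.

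Second, the group must actually form. When $x-p<9$, at most one of the $p$ returning agents can be blocked, because at most one edge is missing at a time. Each remaining agent completes its ICM and comes back, so the group forms within $O(1)$ successful moves, that is, within a bounded number of rounds in which those edges are present.

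\textbf{Case 2: no group ever forms.} We first show that at most $2\delta_{BH}$ agents die. The argument uses Lemma~\ref{lem:atmost_marked_info} and the ICM discussion. An agent enters the black hole through port $p$ of a neighbour $u$ only after writing a mark $(p,\cdot)$ at $u$. A third agent never enters through a port that already carries two identical marks, because it declares the black hole instead. A second agent may enter only when one mark with that port is present. Hence at most two agents die per edge incident to the black hole, and at most $2\delta_{BH}$ die in total. This leaves at least $17$ living agents.

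Next we need progress. Since no group of $9$ forms, Observation~\ref{lem:atmost16} shows a single missing edge can block at most $16$ agents. So in every even round at least one living agent is neither blocked nor waiting.

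The main obstacle is ruling out waiting deadlocks, case (iii) of Lemma~\ref{obs:min_move}, in which agents wait at a node holding two distinct marks. I plan to argue this via Observation~\ref{obs:towardsBH}: if the two writers never return, one of them died. Each such event therefore consumes a death, and it can happen only at the $\le \delta_{BH}$ neighbours of the black hole. Waiting at such a node ends in one of two ways. Either a writer returns (movement resumes), or a further agent arrives along the same marked port, giving $marked_1=marked_2$ at that neighbour, and the black hole is output.

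Combining these, an unblocked agent performs successful moves. By Lemma~\ref{lem:12lm}, every agent finishes, i.e. dies or finds the black hole, within $12lm$ successful moves. We then take the lowest-ID surviving agent, whose DFS is never abandoned (Lemma~\ref{obs:min_move}), and note that some surviving agent's DFS must reach a neighbour of the black hole. There it reads two equal marks, once $2\delta_{BH}$ agents have died or the marks accumulate. Otherwise the DFS would complete on a graph whose unexplored part is only the black hole, which cannot happen without finding it.

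\textbf{Time bound.} The adversary can block an agent in only finitely many consecutive rounds without freezing other agents. We charge each round either to a successful move of some agent or to a blocked round in which, by the above, some other agent moves. This gives a total of $O(l\cdot 24lm)$ rounds by Remark~\ref{rem:24lm}. It remains to bring this down to $O(m^2)$, using $l=2\delta_{BH}+17=O(m)$. The cleaner route is to charge progress only to the sequence of ``current minimum surviving'' agents, each of which needs $O(m)$ successful moves of its own DFS. We also note that the per-agent bound of Lemma~\ref{lem:12lm} is $O(lm)=O(m^2)$, and that in every round some agent progresses, which gives $O(m^2)$ overall.
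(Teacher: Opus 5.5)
Your case split and ingredients match the paper's: the rooted algorithm plus the $grp\_ID$ rule once a group forms, and otherwise at most $2\delta_{BH}$ deaths, Observation~\ref{lem:atmost16} and Lemma~\ref{lem:12lm}. But the time bound, which is the substance of the statement, is not established. Your closing inference is a non sequitur. ``Each agent makes at most $12lm$ successful moves'' plus ``some agent moves in every even round'' only bounds the number of even rounds by the total number of moves of all agents, $l\cdot 12lm=O(l^2m)$. That is $\Theta(m^3)$ when $\delta_{BH}=\Theta(m)$, e.g.\ a black hole adjacent to every node of a path. The ``current minimum'' route does not rescue it. In a 1-bounded 1-interval connected graph the adversary may keep the same non-bridge edge absent forever, so the current minimum can be frozen indefinitely while other agents keep moving, and rounds cannot be charged to its DFS. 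For the same reason, your remark that an agent can be blocked only for finitely many consecutive rounds without freezing others is false in this model.

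The missing idea is to use parallel progress rather than a single mover. The paper argues that, since one missing edge stops at most $16$ agents when no group of $9$ forms, every even round yields at least $l-16=2\delta_{BH}+1$ successful moves. It divides by this rate the budget $(2\delta_{BH}+1)\cdot 12lm+16(12lm-1)$ of moves that can occur before $2\delta_{BH}+1$ agents have completed; one of these must have located the black hole, since at most $2\delta_{BH}$ die. This gives at most $76lm$ even rounds, i.e.\ $O(m\delta_{BH})$ rounds, and the paper adds $O(m^2)$ for the rooted phase. The same bound also covers the time before a group forms in your Case~1.

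Within your own framework there is a short fix, which also avoids the paper's tacit counting of already-dead agents among the movers. Apply your ``someone is unblocked'' argument to a fixed set $S$ of $17$ agents that never die. At most $16$ agents are stopped by the missing edge, so, with the same accounting, some member of $S$ moves in every even round. Meanwhile $S$ has at most $17\cdot 12lm$ successful moves before one of its members locates the black hole. This gives $O(lm)=O(m^2)$ rounds.

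Separately, your deadlock resolution does not work as written. Once both marks are present no further mark can be written (Lemma~\ref{lem:atmost_marked_info}), so they cannot become equal through a new arrival. Instead, if two distinct marks persist, one writer is dead and the other is held behind the missing edge incident to that node. The agents waiting there are therefore among the $16$ charged to that edge.
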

\begin{proof}
During the execution of our algorithm, there are two possible scenarios for the agents: (i) no such group is formed throughout the execution, or (ii) at least one group of 9 agents is formed.

\vspace{0.15cm}
\noindent \textbf{Case (i)}: Let us suppose that a group of 9 agents is not formed during the execution of the algorithm. In this scenario, we claim that 1-BHS is solved in $152m\delta_{BH}$ rounds. In other words, if a group of 9 agents is not formed in $152m\delta_{BH}$ many rounds, the black hole is found. To prove the claim, we guarantee that at least $2\delta_{BH}+1$ many agents complete their movement by $152m\delta_{BH}$ rounds. As per Lemma \ref{lem:12lm}, an agent can have at most $12lm$ successful movements. Hence, in the worst case, cumulatively $(2\delta_{BH}+1)\cdot 12lm$ many successful movements are needed for $2\delta_{BH}+1$ many agents to either die in the black hole or to locate the black hole. Since the remaining 16 agents may not complete their movement, this implies at most $16(12lm-1)$ many successful movements cumulatively can be achieved by these 16 agents. Hence, the cumulative successful movement by all the agents must be at most $(2\delta_{BH}+1)\cdot12lm+16\cdot(12lm-1)$. However, the adversary can make an edge disappear to delay our algorithm. As per Observation \ref{lem:atmost16}, a missing edge can stop at most 16 agents. Hence, at every even round, $2\delta_{BH}+17-16=2\delta_{BH}+1$ successful movements must happen. Let $T$ be the required number of even rounds in the worst case to complete the traversal of $2\delta_{BH}+1$ many agents. Hence, $T=\frac{(2\delta_{BH}+1)\cdot12lm+16\cdot(12lm-1)}{2\delta_{BH}+1}\leq 12lm+\frac{16(12lm)}{3}=76lm$. Thus, if no group of 9 agents is formed, then 1-BHS is solved in $152m\delta_{BH}$ rounds. A black hole is located in this scenario when $2\delta_{BH}$ agents die in the black hole by marking the neighbors of the black hole node, and one agent visits one of these marked neighbors.

\vspace{0.15cm}
\noindent \textbf{Case (ii)}: When a group of 9 agents is formed, the agents initiate the algorithm proposed in \cite{BHS_gen}, which completes in $O(m^2)$ rounds. If this group encounters information about any other group of 9 agents that has also initiated the algorithm of \cite{BHS_gen}, a comparison of the corresponding $grp\_ID$ values is performed. The group with the smaller $grp\_ID$ continues the execution while disregarding any information written by the group with the larger $grp\_ID$. Conversely, the group with the larger $grp\_ID$ terminates its execution. This mechanism ensures that the group with the smallest $grp\_ID$ value ultimately locates the black hole. 
Hence, in the worst case, agents require $O(m\delta_{BH}+m^2)=O(m^2)$ rounds to solve the problem of 1-BHS. 
\end{proof}

\begin{lemma}\label{lm:wbmemory}
    Each node of $G$ is required $O(\log n)$ bits storage to run \textsc{Algo\_Bhs\_Scattered}. 
\end{lemma}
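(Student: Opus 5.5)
We bound the whiteboard contents of an arbitrary node $v$ by listing every field that \textsc{Algo\_Bhs\_Scattered} may write there, showing that each field has only a constant number of simultaneous instances, and showing that each instance takes $O(\log n)$ bits. The fields are:
\begin{itemize}
\item the travel information $wb_v(a_i.parent)$ and $wb_v(a_i.recent)$, stored together with the owner's ID so that later visitors can compare IDs;
\item the two marked port entries $wb_v(marked_1)$ and $wb_v(marked_2)$;
\item the group fields $wb_v(grp)$ and $wb_v(grp\_ID)$;
\item the node storage required by the rooted algorithm of \cite{BHS_gen}, which a group of at least $9$ agents runs.
\end{itemize}

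\textbf{Constant number of entries.} By Lemma~\ref{lem:atmost_travel_info}, at any round at most one agent's travel information is kept at $v$. A smaller-ID agent overwrites that information rather than appending to it, so the travel information never accumulates across agents. By Lemma~\ref{lem:atmost_marked_info}, an agent writes a marked entry only when one of the two slots is $\bot$, so at most two marked entries exist at any time. For groups, the $grp\_ID$ comparison rule leaves only the smallest group identifier recorded, since a group with a smaller $grp\_ID$ overwrites a larger one. So a single $grp$ bit and a single $grp\_ID$ value suffice. The rooted algorithm of \cite{BHS_gen} needs only $O(\log \delta_v)$ storage per node, and $O(\log \delta_v)\subseteq O(\log n)$.

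\textbf{Size of each entry.} Every port value lies in $\{-1,0,\dots,\delta_v-1\}$ and $\delta_v\le n-1$, so a port needs $O(\log n)$ bits. IDs lie in $[1,n^c]$ for a constant $c$, so an ID needs $O(c\log n)=O(\log n)$ bits. Each travel or marked entry is a constant-size tuple of ports and IDs, so each uses $O(\log n)$ bits. The $grp$ flag is a single bit, and $grp\_ID$ is one ID. Summing a constant number of $O(\log n)$-bit fields gives $O(\log n)$ bits in total.

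\textbf{Main obstacle.} The hardest step is to confirm that overwriting never forces any other stale entry to stay on the whiteboard. This covers two situations: an agent that has not completed its ICM, and several groups passing through $v$. For the first, the marked slots are bounded independently of the travel information. For the second, the group fields are replaced rather than accumulated, which follows from the group-comparison rule described in part (II) of the algorithm. Hence the constant bound on the number of entries holds throughout the execution.
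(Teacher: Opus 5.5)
Your proof is correct and takes essentially the same route as the paper. You bound the number of whiteboard entries at a node via Lemmas~\ref{lem:atmost_travel_info} and~\ref{lem:atmost_marked_info} (one travel entry, two marked entries, plus the $grp$ flag and a single $grp\_ID$), and bound each entry by $O(\log n)$ bits using IDs in $[1,n^c]$ and $\delta_v\le n-1$. Your explicit accounting for the $O(\log \delta_v)$ node storage of the rooted algorithm of \cite{BHS_gen} is a small addition that the paper's proof leaves implicit.
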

\begin{proof}
    As per Lemma \ref{lem:atmost_travel_info} and \ref{lem:atmost_marked_info}, at each node $v$, there are at most two marked port information and one travel information at any round. Each of these information consists of an agent ID along with a port number. Therefore, each node requires no more than $O(\log n)$ bits of storage. In addition, the parameter $wb_v(recent)$ requires $O(\log \Delta)$ bits, $wb_v(grp)$ requires $O(1)$ bits, and $wb_v(grp\_ID)$ requires $O(\log n)$ bits. Since $\Delta \leq n-1$, the overall storage required per node is $O(\log n)$ bits.
\end{proof}

\begin{lemma}\label{lm:agentmemory}
Each agent requires $O(\log n)$ bits of memory to execute \textsc{Algo\_Bhs\_Scattered}.    
\end{lemma}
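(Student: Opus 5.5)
The argument is a direct inventory of the local state an agent keeps, bounding each item by $O(\log n)$ bits. The key observation is that an agent never stores anything about its own DFS beyond a constant number of port values and identifiers. The DFS bookkeeping (parent port and last used port at each node) is kept on the whiteboards as travel information, as guaranteed by the description of \textsc{Algo\_Bhs\_Scattered} and bounded per node by Lemma~\ref{lm:wbmemory}.

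I would go through the agent parameters one by one.
\begin{itemize}
\item $a_i.state$, $a_i.success$ and $a_i.grp$ are binary, so they take $O(1)$ bits.
\item $a_i.pout$ and $a_i.pin$ range over $\{-1,0,\dots,\delta_v-1\}$, so they take $O(\log \Delta)\subseteq O(\log n)$ bits, since $\Delta\le n-1$.
\item $a_i.ID$ and $a_i.grp\_ID$ are identifiers from $[1,n^c]$ with $c$ a constant, so they take $O(c\log n)=O(\log n)$ bits.
\end{itemize}

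Next I would account for the transient information an agent reads in the Look phase and compares in the Compute phase. This includes the minimum ID among co-located agents, the IDs and ports in $marked_1$, $marked_2$ and the travel information, and the parity of the current round. Each can be computed on the fly while scanning, keeping only a running minimum (an ID plus a port), so it adds $O(\log n)$ bits. The round parity needs a single bit, since it is only used for the odd/even alternation. During an ICM the agent must remember the port $p_i$ and the port by which to return. These are exactly $pout$ and $pin$, so no extra space is needed.

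The step I expect to need the most care is the group phase (part (II)), where the agent runs the rooted algorithm of \cite{BHS_gen}. There I would invoke the fact that the rooted algorithm of \cite{BHS_gen} itself runs with $O(\log n)$ bits per agent, as the paper's comparison table states. Adding $grp$ and $grp\_ID$ costs only $O(\log n)$ more. One more point to argue is that following a smaller-ID agent's traversal needs no memory of that agent. The port to take is read from $wb_v(a_j.recent)$ at each node, so the agent never stores a path or a counter of how many traversals it has followed. Summing a constant number of $O(\log n)$-bit quantities gives $O(\log n)$ bits per agent.
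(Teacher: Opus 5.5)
Your proposal is correct and follows the same route as the paper's proof. Both inventory the agent's variables: $state$, $success$ and $grp$ at $O(1)$ bits, $pout$ and $pin$ at $O(\log \Delta)$ bits, and $ID$ and $grp\_ID$ at $O(\log n)$ bits, summing to $O(\log n)$ since $\Delta \le n$. Your extra remarks on transient Look/Compute data, ICM bookkeeping, the group phase inheriting the $O(\log n)$ bound of \cite{BHS_gen}, and following another agent by reading $wb_v(a_j.recent)$ are sound refinements that the paper leaves implicit.
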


\begin{proof}
    Each agent $a_i$ maintains the parameters $a_i.state$, $a_i.success$, and $a_i.grp$, each of which requires $O(1)$ bits of memory. Additionally, the parameters $a_i.pout$ and $a_i.pin$ store port numbers and hence require $O(\log \Delta)$ bits of memory, where $\Delta$ is the maximum degree of the graph. The identifiers $a_i.ID$ and $a_i.grp\_ID$ each require $O(\log n)$ bits of memory. Since $\Delta \leq n$, the overall memory required per agent remains bounded by $O(\log n)$ bits. Hence the statement of our lemma holds. 
\end{proof}

\noindent Hence, our main theorem follows from Theorem \ref{lm:time+correctness}, Lemma \ref{lm:wbmemory}, and Lemma \ref{lm:agentmemory}. 
\begin{theorem}\label{th:scattered}
    \textsc{Algo\_Bhs\_Scattered} solves the 1-BHS problem using $2\delta_{BH} + 17$ agents within $O(m^2)$ rounds. It requires $O(\log n)$ bits of storage per node, and each agent is equipped with $O(\log n)$ bits of memory.
\end{theorem}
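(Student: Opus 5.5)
The statement is essentially a packaging of the three results just established, so the plan is to assemble it from them. The one point needing care is that the agent count $2\delta_{BH}+17$ is precisely the count under which those results were proved. The proof splits along the three quantities in the statement: correctness together with round complexity, storage per node, and memory per agent.

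For correctness and time, I would invoke Theorem~\ref{lm:time+correctness}. First I would check that its two-case argument really uses $l=2\delta_{BH}+17$ agents. If no group of 9 agents ever forms, Observation~\ref{lem:atmost16} says a single missing edge blocks at most 16 agents. Hence in every even round at least $2\delta_{BH}+1$ agents make a successful move. Combined with the per-agent bound of Lemma~\ref{lem:12lm}, this forces at least $2\delta_{BH}+1$ agents to complete their movement within $O(lm)=O(m\delta_{BH})$ rounds (Remark~\ref{rem:24lm}). Since ICM guarantees that at most $2\delta_{BH}$ agents die, with two marks left at each neighbour of the black hole, at least one of these agents must locate it. If a group of 9 agents does form, the rooted algorithm of \cite{BHS_gen} takes over in $O(m^2)$ rounds, and the $grp\_ID$ comparison ensures that the group with the minimum $grp\_ID$ finishes. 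Using $\delta_{BH}\le m$, the total is $O(m\delta_{BH}+m^2)=O(m^2)$.

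For storage per node, I would cite Lemma~\ref{lm:wbmemory}. It rests on Lemma~\ref{lem:atmost_travel_info}, which allows at most one travel record per node, and Lemma~\ref{lem:atmost_marked_info}, which allows at most two marked-port records. Each record is an ID and a port, i.e., $O(\log n)$ bits, plus $O(\log n)$ bits for $grp$ and $grp\_ID$. We also need the rooted algorithm of \cite{BHS_gen} to use only $O(\log \delta_v)\subseteq O(\log n)$ bits, which is stated in the cited theorem. For agent memory, Lemma~\ref{lm:agentmemory} gives $O(\log n)$, since IDs are drawn from $[1,n^c]$. We would add the remark that the agents running the rooted algorithm use $O(\log n)$ memory per \cite{BHS_gen}, matching Table~\ref{fig:literature}.

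I expect the main obstacle to be the interface between the two phases. An agent that stops its own traversal on seeing $wb_v(grp)=True$ must leave behind only bounded information. Likewise, stale travel or marked information left by non-group agents must not corrupt the rooted execution or push storage beyond $O(\log n)$. My first step there would be to argue that group agents simply overwrite or ignore the non-group fields, so the whiteboard still holds a constant number of $O(\log n)$-bit entries. I would also argue that the deaths caused by the ICM phase do not reduce the group below what \cite{BHS_gen} needs, because the group is formed only from agents that have completed their ICM. With these points settled, the theorem follows by combining the three results.
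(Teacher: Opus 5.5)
Your proposal is correct and follows the paper's route: the paper obtains this theorem directly by combining Theorem~\ref{lm:time+correctness} (correctness and $O(m^2)$ rounds), Lemma~\ref{lm:wbmemory} (node storage) and Lemma~\ref{lm:agentmemory} (agent memory), exactly as you plan. Your extra care about the hand-off to the rooted algorithm of \cite{BHS_gen} (stale whiteboard entries and the storage used during the group phase) goes beyond what the paper states explicitly, but it does not change the argument.
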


\section{Emergent Black Hole Search in Static Graphs}
In this section, we present a solution to \textsc{Ebhs} in static graphs using four co-located agents, assuming that nodes have no storage. The solution relies on a perpetual exploration algorithm based on a Universal Exploration Sequence (UXS) \cite{Reingold2008}, which guarantees that a single agent, say $A_i$, visits every node infinitely often. The algorithm requires $O(\log n)$ memory at the agent, and does not use any storage at the nodes. We refer to this procedure as \textsc{Algo\_UXS}.

\medskip
\noindent\textbf{High-level description of \textsc{Algo\_UXS}.}
A UXS provides a deterministic method for a memory-limited agent to explore an unknown graph without using any node storage. Agent $A_i$ follows a fixed sequence of integers that prescribes its moves using only the port through which it entered the current node. More precisely, suppose agent $A_i$ enters a node $v$ of degree $\delta_v$ through port $p$, and the next element of the sequence is $\beta$. Agent $A_i$ then exits through port $(p + \beta) \bmod \delta_v$. Thus, the movement rule depends only on the local degree and the entry port, and is independent of the global port labeling.

For every positive integer $n$, there exists a UXS of length $\mathrm{poly}(n)$ for the family of all connected graphs with at most $n$ nodes, and such a sequence can be computed deterministically by a sequential algorithm \cite{Reingold2008}. Let $\mathcal{S}_n$ be a UXS. Executing \textsc{Algo\_UXS} with $\mathcal{S}_n$ on any connected static graph $G$ with at most $n$ nodes ensures that agent $A_i$ explores all nodes of $G$.
If agents are unaware of the size of the graph, agent $A_i$ executes this process for guessed value $N=2,2^2,2^3\ldots$. For some $\alpha$, $N=2^\alpha\geq n$. This ensures perpetual exploration, as it repeats this procedure. The agent only needs to store its current position in the sequence, its identifier, and the incoming port number. If agent $A_i$ is aware of $n$, then agent $A_i$ can achieve termination, and the exploration time is proportional to the length of the UXS and the memory requirement is $O(\log n)$.

From the above description, observe that in each round the agent $A_i$ moves from its current node to an adjacent node. Based on the direction of the next move with respect to the previously traversed edge, we classify the movement of $A_i$ into two types:

\begin{itemize}
    \item \textit{Forward move}: Suppose agent $A_i$ moves from node $u$ to node $v$ in round $r$. If in round $r+1$ it moves from $v$ to a node $w \neq u$, this move is called a forward move.
    
    \item \textit{Backward move}: Suppose agent $A_i$ moves from node $u$ to node $v$ in round $r$. If in round $r+1$ it moves from $v$ back to node $u$, this move is called a backward move.
\end{itemize}

In the next section, we describe the cautious chain movement, performed by a group of four agents to replicate a single movement of \textsc{Algo\_UXS}. 

\subsection{The Algorithm: \textsc{Algo\_UXS} using Cautious Chain Movement}
We propose a cautious chain movement performed by 4 agents. Our algorithm replicates each edge traversal of \textsc{Algo\_UXS} by 4 agents via the cautious chain movement. This makes sure that at least one alive agent finds the black hole after it emerges . Below we provide the details of the cautious chain movement.\\
\\
Let $a_1$, $a_2$, $a_3$ and $a_4$ be four agents initially co-located at the home base node $\mathcal{H}$ at round $0$. Since the black hole cannot emerge at node $\mathcal{H}$ at round $0$, all agents are initially safe. Each round $r\geq 0$ of \textsc{Algo\_UXS}, we divide into sub-rounds based on the type of move. We divide this into two cases: (i) $r=0$, and (ii) $r\neq 0$. The algorithm for alive agents is as follows. 

\begin{enumerate}
    \item $\bm{r=0:}$ At round $r=0$, the move determined by \textsc{Algo\_UXS} is a forward move (unless $n=1$). Let all agents be initially co-located at the home base node $\mathcal{H}$. Let $p$ be the outgoing port at $\mathcal{H}$ determined by \textsc{Algo\_UXS}, and let $u$ be the node adjacent to $\mathcal{H}$ via port $p$. At round $r=0$, agent $a_1$ and $a_2$ remains at node $\mathcal{H}$, while agents $a_3$ and $a_4$ move to node $u$ through port $p$.

\item $\bm{r \neq 0:}$ For every round $r \neq 0$, the agents occupy two adjacent nodes, say $v_1$ and $v_2$, where agents $a_1$ and $a_2$ are at node $v_1$, and agents $a_3$ and $a_4$ are at node $v_2$. Agents $a_1$ and $a_2$ at node $v_1$ are aware of port $p_1$, which leads to node $v_2$, and agents $a_3$, $a_4$ at node $v_2$ are aware of port $p_2$, which leads to node $v_1$. The next port $p_3$ at node $v_2$ is determined by \textsc{Algo\_UXS} to reach the next node $v_3$. This move can be either a forward move (i.e., $p_3\neq p_2$) or a backward move (i.e., $p_3 = p_2$). The cautious chain movement handles these two cases as follows. 

\vspace{0.15cm}
\noindent\textbf{Case 1 $(p_3 = p_2)$ (backward):}
Agents $a_3$ and $a_4$ at node $v_2$ compute the next port $p_3$ according to \textsc{Algo\_UXS}. Since $p_3 = p_2$, the intended move is a backward to node $v_1$. The cautious chain movement (refer to Figure \ref{fig:backtrack}) is executed as follows in sub-rounds. 

\begin{enumerate}
    \item \textbf{Sub-round 1:} Agent $a_3$ moves from $v_2$ to $v_1$ through port $p_2$.
    
    \item \textbf{Sub-round 2:} Agents $a_1$, $a_2$ at $v_1$:
    \begin{itemize}
        \item If $a_3$ does not arrive, it declares port $p_1$ at $v_1$ lead to a black hole.
        \item Otherwise, agents $a_1$, $a_2$ gets the information of backtrack from agent $a_3$, and move from $v_1$ to $v_2$ through port $p_1$. 
    \end{itemize}
    
    \item \textbf{Sub-round 3:} Agent $a_4$ at $v_2$:
    \begin{itemize}
        \item If $a_1$ and $a_2$ do not arrive, it declares port $p_2$ at $v_2$ lead to a black hole.
        \item Otherwise, agent $a_4$ moves from $v_2$ to $v_1$ through port $p_2$.
    \end{itemize}
    
    \item \textbf{Sub-round 4:} Agent $a_3$ at $v_1$:
    \begin{itemize}
        \item If $a_4$ does not arrive, it declares port $p_1$ at $v_1$ lead to a black hole.
        \item Otherwise, the surviving agents declare the completion of the round $r\geq 1$.
    \end{itemize}
\end{enumerate}

\begin{figure}
    \centering
\includegraphics[width=0.95\linewidth]{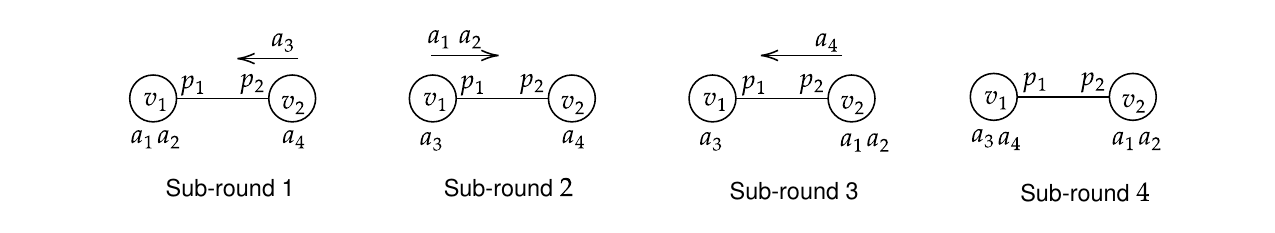}
    \caption{Cautious chain movement for backward move.}
    \label{fig:backtrack}
\end{figure}

\vspace{0.15cm}
\noindent\textbf{Case 2 $(p_3 \neq p_2)$ (forward):}
Agents $a_3$ and $a_4$ at node $v_2$ compute the next port $p_3$ according to \textsc{Algo\_UXS}. Since $p_3 \neq p_2$, the intended move is a forward move to node $v_3$. Let $p_3'$ denote the incoming port at $v_3$. The cautious chain movement (refer to Figure \ref{fig:forward}) is executed as follows in sub-rounds.

\begin{enumerate}
    \item \textbf{Sub-round 1:} Agent $a_3$ moves from $v_2$ to $v_1$ through port $p_2$.
    
    \item \textbf{Sub-round 2:} Agents $a_1$, $a_2$ at $v_1$:
    \begin{itemize}
        \item If $a_3$ does not arrive, it declares port $p_1$ at $v_1$ lead to a black hole.
        \item Otherwise, agents $a_1$, $a_2$ gets the information of forward from agent $a_3$. Agent $a_3$ moves to node $v_2$. 
    \end{itemize}

    \item \textbf{Sub-round 3:} Agent $a_4$ at $v_2$:
    \begin{itemize}
        \item If $a_3$ does not arrive at $v_2$, it declares port $p_2$ at $v_2$ lead to a black hole.
        \item Otherwise, agent $a_4$ moves from $v_2$ to $v_3$ through port $p_3$, and $a_2$ moves from $v_1$ to $v_2$.
    \end{itemize}
    \item \textbf{Sub-round 4:} Agent $a_3$ at $v_2$:
    \begin{itemize}
        \item If $a_2$ does not arrive at $v_2$, it declares port $p_2$ at $v_2$ lead to a black hole.
        \item Otherwise, agent $a_2$ moves from $v_2$ to $v_1$ through port $p_2$.
    \end{itemize}
    \item \textbf{Sub-round 5:} Agent $a_1$ at $v_1$:
    \begin{itemize}
        \item If $a_2$ does not arrive at $v_1$, it declares port $p_1$ at $v_1$ lead to a black hole.
        \item Otherwise, agents  $a_1$, $a_2$ move from $v_1$ to $v_2$ through port $p_1$.
    \end{itemize}

     \item \textbf{Sub-round 6:} Agent $a_3$ at $v_2$:
    \begin{itemize}
        \item If $a_1$, $a_2$ does not arrive at $v_3$, it declares port $p_2$ at $v_2$ lead to a black hole, and moves to node $v_3$ via port $3$.
        \item Otherwise, it moves to node $v_3$ via port $p_3$.
    \end{itemize}

     \item \textbf{Sub-round 7:} Agent $a_4$ at $v_3$:
    \begin{itemize}
        \item If $a_3$ does not arrive at $v_3$, it declares port $p_3'$ at $v_3$ lead to a black hole.
        \item Otherwise, if $a_3$ has information about a black hole, it gets the information about a black hole. Otherwise, the surviving agents declare the completion of the round $r\geq 1$.
    \end{itemize}
\end{enumerate}

\begin{figure}[ht]
    \centering
\includegraphics[width=0.75\linewidth]{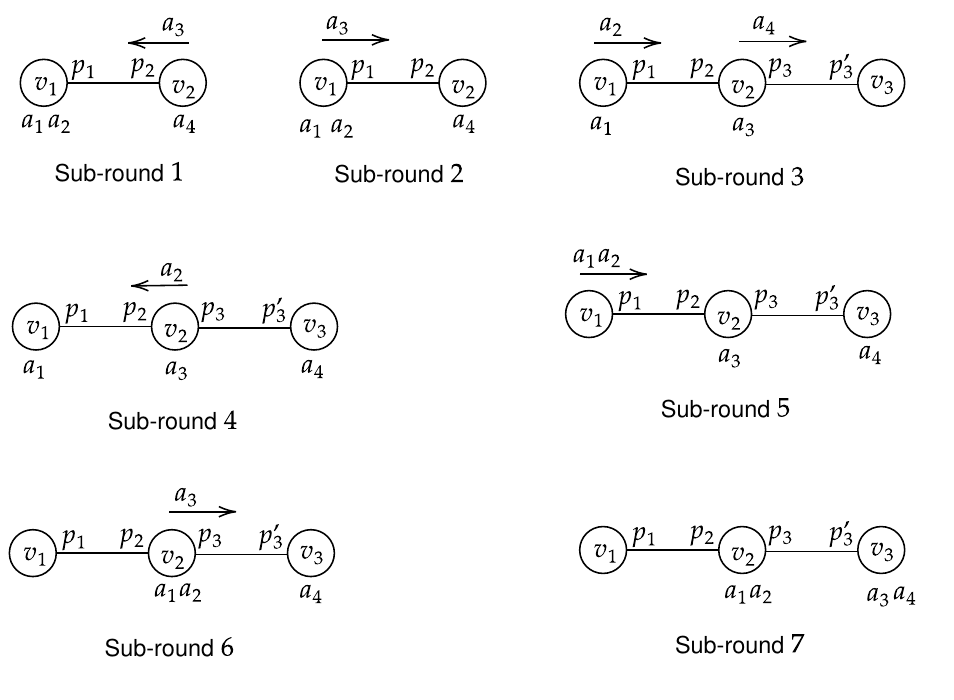}
    \caption{Cautious chain movement for forward move.}
    \label{fig:forward}
\end{figure}

At the beginning of round $r+1$, the agents occupy two adjacent nodes in the same configuration as at the beginning of round $r$. Hence, the same cautious chain movement strategy can be applied in round $r+1$.
\end{enumerate}

\subsection{The correctness and analysis of the algorithm}

\begin{theorem}\label{thm:ebhs}
Simulating every move of \textsc{Algo\_UXS} by the proposed chain movement strategy yields a correct solution to the \textsc{Ebhs} problem.
\end{theorem}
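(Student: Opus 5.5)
We will prove the theorem by establishing three properties of the simulation. (a) \emph{Safety}: whenever the black hole emerges, at least one agent survives. (b) \emph{Correct detection}: every declaration of the form ``port $p$ at $x$ leads to a black hole'' is made at a safe node and is correct. (c) \emph{Progress}: if the black hole emerges at some round $r$, some surviving agent eventually makes such a declaration and terminates.

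The whole argument rests on an invariant maintained at the start of every simulated round $r\ge 1$. The four agents occupy two adjacent nodes $v_1,v_2$, with $a_1,a_2$ at $v_1$ and $a_3,a_4$ at $v_2$. Each pair knows the port leading to the other node. No declaration has been made yet. Moreover, $v_2$ is the current position of the simulated \textsc{Algo\_UXS} agent. We prove this by induction on $r$. The base case is round $0$: since $\mathcal{H}$ cannot be the black hole at round $0$, $a_3,a_4$ move to $u$ while $a_1,a_2$ stay at $\mathcal{H}$. If no emergence occurs, the case analysis of sub-rounds in Case~1 and Case~2 restores the same configuration, now shifted along the UXS move. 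So as long as no black hole appears, the four agents faithfully reproduce the walk of \textsc{Algo\_UXS}.

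The core step is a case analysis on the sub-round in which the black hole emerges. Because agents are split between two nodes (three in Case~2), emergence at a single node $x$ destroys at most the agents currently at $x$, leaving at least one agent elsewhere. Each subsequent scheduled move into $x$ is performed by exactly one agent, and the agents at the other endpoint wait. For every sub-round of Case~1 (sub-rounds 1--4) and Case~2 (sub-rounds 1--7), we check two things. First, some agent remains outside $x$. Second, that agent is scheduled to expect an arrival from $x$, or to watch an agent depart toward $x$. A non-arrival at the next check therefore identifies, from a safe node, the port leading to $x$. Since the graph is static, a missing arrival can only be explained by destruction, so any declaration is correct. This gives (a) and (b).

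For (c), suppose the black hole emerges at node $x$ at round $r$ and no declaration follows immediately. Then the agents keep simulating \textsc{Algo\_UXS}, which, with guesses $N=2,4,\dots$, is a perpetual exploration and visits every node, in particular a neighbour of $x$, and then $x$ itself. The first time the chain attempts to enter $x$, a single agent is lost and the waiting partner declares via the mechanism above. Termination time is bounded by the time for the UXS walk (with guessing) to next reach $x$, times a constant factor of at most $7$ sub-rounds per move.

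The main obstacle we expect is the exhaustive sub-round analysis of Case~2. The forward move briefly places agents on three nodes $v_1,v_2,v_3$. Sub-round 6 also relays information from $a_3$ to $a_4$ at $v_3$ when $v_2$ becomes the black hole while $a_1,a_2$ travel into it. We must verify carefully that in every emergence pattern, including emergence at $v_1$ or $v_2$ while agents are in transit, the surviving agent's declared port indeed leads to $x$. We must also confirm that no agent falsely declares because of a partner's mere delay, and that the informed survivor terminates. We would tabulate, for each emergence sub-round and location $x\in\{v_1,v_2,v_3\}$, who dies and who declares.
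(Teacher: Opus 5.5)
Your architecture is the same as the paper's: an invariant on the two-node configuration at the start of each simulated round, a case split on the sub-round and location of emergence, and perpetual exploration for the case where $x$ lies away from the chain. The gap is that the sub-round verification, which carries the entire proof, is not done. The uniform principle you offer in its place (``each subsequent move into $x$ is performed by exactly one agent while the agents at the other endpoint wait'') is false for this schedule:
\begin{itemize}
\item In Case~1 sub-round~2 and in Case~2 sub-round~5, $a_1$ and $a_2$ enter $v_2$ together; in the latter, $v_1$ is left empty.
\item In Case~2, $a_4$ enters $v_3$ at sub-round~3 with nobody awaiting its return, and $a_3$ follows at sub-round~6 without any confirmation.
\end{itemize}

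Hence your step (c) is wrong as stated. If $x$ is not among $v_1,v_2$ at the start of a simulated round, the first entry into $x$ is necessarily a forward move with $v_3=x$. It costs two agents ($a_4$, then $a_3$), not one. The declaration is made by $a_1,a_2$ at $v_2$ only in sub-round~2 of the \emph{next} simulated round, when $a_3$ fails to return.

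Safety likewise cannot rest on ``emergence kills only the agents currently at $x$'':
\begin{itemize}
\item If $x=v_2$ emerges in Case~2 sub-round~3 before $a_4$ leaves, $a_2$ walks in afterwards. Then $a_1$ alone survives, detecting at sub-round~5.
\item If $x=v_2$ emerges in Case~2 sub-round~4 and $a_2$ escapes to $v_1$, then $a_1,a_2$ walk in together at sub-round~5. The only survivor is $a_4$ at $v_3$, which detects through $a_3$'s absence at sub-round~7.
\end{itemize}
Also, the relay in sub-round~6 covers $x=v_1$, not $x=v_2$: if $x=v_2$, then $a_3$ is already dead.

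These cases do work out, but only through schedule-specific facts. For instance, $a_4$ has already been dispatched to $v_3$ before $a_1,a_2$ commit to $v_2$. So no uniform shortcut replaces the enumeration.

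To close the gap you must actually build the table you mention. For each sub-round of both cases and each $x\in\{v_1,v_2,v_3\}$:
\begin{itemize}
\item trace the follow-on deaths after emergence, under both readings of whether an agent leaving $x$ in the emergence round escapes;
\item identify the surviving detector;
\item check from the missing agent's itinerary since it was last seen that it can only have died at the far endpoint of the declared port.
\end{itemize}
That itinerary check, not merely ``the graph is static, so non-arrival means destruction,'' is what makes (b) hold. This exhaustive trace is precisely the content of the paper's proof.
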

\begin{proof}
    At round $r=0$, the home base node $\mathcal{H}$ is guaranteed to be safe; hence, no agent is destroyed at $\mathcal{H}$ in this round. According to our strategy for round $r=0$, agents $a_1$ and $a_2$ remain at node $\mathcal{H}$, while agents $a_2$ and $a_3$ move to node $u$ as determined by \textsc{Algo\_UXS}.

Assume that up to the completion of round $r-1$ of \textsc{Algo\_UXS}, no black hole has appeared. Suppose that a black hole emerges during the execution of round $r$ of \textsc{Algo\_UXS}, i.e., it may appear in any of the sub-rounds used to simulate round $r$. Since no black hole appears before the end of round $r-1$, the agents are, at that point, located at two adjacent nodes $v_1$ and $v_2$, where agents $a_1$ and $a_2$ are at node $v_1$, and agents $a_3$ and $a_4$ are at node $v_2$. Agents $a_1$ and $a_2$ know the outgoing port $p_1$ at $v_1$ leading to $v_2$, while agents $a_3$ and $a_4$ know the incoming port $p_2$ at $v_2$ from $v_1$ and can compute the next outgoing port $p_3$ according to \textsc{Algo\_UXS}. At round $r$, the agents at node $v_2$ determine the next move determined by \textsc{Algo\_UXS}, which is either a forward move (to a node $v_3 \neq v_1$) or a backward (to node $v_1$). We show that, in both cases, if the black hole emerges during the execution of round $r$, then one of the agents correctly survives and identifies the port leading to the black hole at some round $r'$ where $r'\geq r$.

Let $v_{BH}$ be the node at which the black hole emerges during the execution of round $r$. There are two possible cases: (i) $v_{BH}\notin \{v_1,v_2\}$, or (ii) $v_{BH}\in\{v_1,v_2\}$.

\begin{itemize}
    \item Case (i) $(v_{BH}\notin \{v_1,v_2\})$: In some round $r'\geq r$, the forward move from node $v_2$ leads to node $v_3$, where $v_3 = v_{BH}$. In this case, at the end of the execution of round $r'$, agents $a_1$ and $a_2$ are at node $v_2$, and agents $a_3$ and $a_4$ are at node $v_3$. Since the black hole has emerged at node $v_3$, agents $a_2$ and $a_3$ are destroyed at node $v_3$. 

    During the execution of round $r'+1$ according to \textsc{Algo\_UXS}, agent $a_2$ is supposed to visit agents $a_1$ and $a_2$ in sub-round~1. Since agent $a_2$ has been destroyed, it cannot reach agents $a_1$ and $a_2$. Agents $a_1$ and $a_2$, located at node $v_2$, know the outgoing port through which agents $a_3$ and $a_4$ moved. Hence, agents $a_1$ and $a_2$ correctly identify that this port leads to $v_{BH}$.
    \item Case (ii) $(v_{BH}\in \{v_1,v_2\})$: In this case, as per our strategy at the beginning of round $r$, agents $a_1$, $a_2$ are at node $v_1$, and agents $a_3$ and $a_4$ are at node $v_2$. Now, the black hole can emerge at node $v_1$ or $v_2$ in one of the sub-rounds. From node $v_2$, there are two types of movement that can happen at round $r$: (i) backward, and (ii) forward. Based on these cases, we show the correctness.

    \begin{enumerate}
        \item \textbf{Backward move:} If the black hole emerges during any sub-round of round $r$, and the next move as per \textsc{Algo\_UXS} is backward, all surviving agents correctly determine its location for the following reasons.
        \begin{itemize}
            \item \textbf{Sub-round 1:} If $v_1=v_{BH}$, agents $a_1$, $a_2$ die in sub-round 1, and agent $a_3$ also dies at node $v_1$ as it moves to node $v_1$ in sub-round 1. In sub-round 3, agent $a_4$ is waiting for agents $a_1$ and $a_2$ at node $v_2$. Since $a_1$, $a_2$ were destroyed at sub-round 1, agent $a_4$ does not find $a_1$ and $a_2$ at node $v_2$, and correctly concludes that port $p_2$ leads to the black hole.  

            If $v_2=v_{BH}$, agents $a_3$ and $a_4$ die in sub-round 1. In sub-round 2, agents $a_1$ and $a_2$ are waiting for agent $a_3$ at node $v_1$. Since $a_3$ was destroyed at sub-round 1, agents $a_1$ and $a_2$ do not find $a_3$ at node $v_3$, and correctly conclude that port $p_1$ leads to the black hole.
            \item \textbf{Sub-round 2:} If $v_1=v_{BH}$, agent $a_1$, $a_2$ dies in sub-round 2, and agent $a_3$ also dies at node $v_1$ as it moves to node $v_1$ in sub-round 1. In sub-round 3, agent $a_4$ is waiting for agents $a_1$ and $a_2$ at node $v_2$. Since $a_1$, $a_2$ were destroyed at sub-round 2, agent $a_4$ does not find $a_1$ and $a_2$ at node $v_2$, and correctly concludes that port $p_2$ leads to the black hole.  

            If $v_2=v_{BH}$, agent $a_4$ dies in sub-round 2. In sub-round 2, $a_1$ and $a_2$ move node $v_2$, and dies. Since, in sub-round 4, $a_3$ expects $a_4$ at node $v_1$, and it does not find it, and correctly concludes that port $p_1$ leads to the black hole.
            \item \textbf{Sub-round 3:} If $v_1=v_{BH}$, agent $a_3$ and $a_4$ die in sub-round 3, and $a_1$, $a_2$ are alive at node $v_2$, and remain alive at the end of sub-round 4 of round $r$. In the sub-round 1 of round $r+1$, agents $a_1$ and $a_2$ are expecting $a_3$ at node $v_2$, and they do not find it, and correctly conclude that port $p_2$ leads to the black hole.

            If $v_2=v_{BH}$, agent $a_1$, $a_2$ and $a_4$ die in sub-round 3. In sub-round 4, agent $a_3$ is waiting for agent $a_4$ at node $v_1$. Since $a_4$ was destroyed at sub-round 3, agent $a_3$ does not find $a_4$ at node $v_1$, and correctly concludes that port $p_1$ leads to the black hole.
            \item \textbf{Sub-round 4:} If $v_1=v_{BH}$, agent $a_3$ and $a_4$ die in sub-round 4. In the sub-round 1 of round $r+1$, agent $a_1$ is expecting $a_3$ at node $v_2$, and it does not find it, and correctly concludes that port $p_2$ leads to the black hole.

            If $v_2=v_{BH}$, agents $a_1$ and $a_2$ die at sub-round 4 of round $r$. In sub-round 1 of $r+1$, agent $a_3$ moves to node $v_2$, and it dies. In sub-round 3 of $r+1$, agent $a_4$ is expecting agents $a_1$, $a_2$ or $a_3$ at node $v_1$, and it does not find them, and correctly concludes that port $p_1$ leads to the black hole.
        \end{itemize}
        In this way, it correctly identifies a port that leads to a black hole if a black hole emerges in any sub-round of round $r$ for backward. 
        \item \textbf{Forward move:} If the black hole emerges during any sub-round of round $r$, and the next move as per \textsc{Algo\_UXS} is forward, all surviving agents correctly determine its location for the following reasons.
        \begin{itemize}
            \item \textbf{Sub-round 1:} If $v_1=v_{BH}$, agents $a_1$, $a_2$ die in sub-round 1, and agent $a_3$ also dies at node $v_1$ as it moves to node $v_1$ in sub-round 1. In sub-round 3, agent $a_4$ is waiting for agent $a_3$ at node $v_2$. Since $a_3$ was destroyed at sub-round 1, agent $a_4$ does not find $a_3$ at node $v_2$, and correctly concludes that port $p_2$ leads to the black hole.  

            If $v_2=v_{BH}$, agents $a_3$ and $a_4$ die in sub-round 1. In sub-round 2, agents $a_1$ and $a_2$ are waiting for agent $a_3$ at node $v_1$. Since $a_3$ was destroyed at sub-round 1, agents $a_1$ and $a_2$ do not find $a_3$ at node $v_3$, and correctly conclude that port $p_1$ leads to the black hole.
            \item \textbf{Sub-round 2:} If $v_1=v_{BH}$, agent $a_1$, $a_2$ dies in sub-round 2, and agent $a_3$ also dies at node $v_1$ as it moves to node $v_1$ in sub-round 1. In sub-round 3, agent $a_4$ is waiting for agent $a_3$ at node $v_2$. Since $a_3$ was destroyed at sub-round 1, agent $a_4$ does not find $a_3$ at node $v_2$, and correctly concludes that port $p_2$ leads to the black hole.   

            If $v_2=v_{BH}$, agent $a_4$ dies in sub-round 2. In sub-round 2, $a_3$ moves node $v_2$, and dies. In sub-round 3, agent $a_2$ moves node $v_2$, and dies as well. Since in sub-round 5, $a_1$ expects $a_2$ at node $v_1$, but does not find it, it correctly concludes that port $p_1$ leads to the black hole.

            \item \textbf{Sub-round 3:} If $v_1=v_{BH}$, agents $a_1$ and $a_2$ dies in sub-round 3. In sub-round 4, agent $a_3$ expects agent $a_2$ at node $v_2$. Since agent $a_2$ dies at sub-round 3, agent $a_3$ does not find agent $a_2$ at node $v_2$, and correctly concludes port $p_2$ leads to a black hole.

            If $v_2=v_{BH}$, agents $a_3$, $a_4$ dies in sub-round 3. In sub-round 3, agent $a_2$ moves node $v_2$, and dies as well. Since in sub-round 5, $a_1$ expects $a_2$ at node $v_1$, but does not find it, it correctly concludes that port $p_1$ leads to the black hole.

          \item \textbf{Sub-round 4:} If $v_1=v_{BH}$, agents $a_1$ and $a_2$ dies at node $v_1$. In sub-round 6, agent $a_3$ expects agents $a_1$ and $a_2$ at node $v_2$. Since agents $a_1$, $a_2$ die in sub-round 4, agent $a_3$ does not find agents $a_1$ and $a_2$ at node $v_2$, and correctly concludes port $p_2$ leads to a black hole.

            If $v_2=v_{BH}$, agent $a_2$ and $a_3$ dies in sub-round 4 at node $v_2$. Since in sub-round 5, $a_1$ expects $a_2$ at node $v_1$, but does not find it, it correctly concludes that port $p_1$ leads to the black hole.
          
           \item \textbf{Sub-round 5:} If $v_1=v_{BH}$, agents $a_1$ and $a_2$ dies at node $v_1$. In sub-round 6 , agent $a_3$ expects agents $a_1$ and $a_2$ at node $v_2$. Since agents $a_1$, $a_2$ die in sub-round 4, agent $a_3$ does not find agents $a_1$ and $a_2$ at node $v_2$, and correctly concludes port $p_2$ leads to a black hole.

            If $v_2=v_{BH}$, agent $a_2$ dies in sub-round 5 at node $v_2$. In sub-round 6, agents $a_1$ and $a_2$ also die at node $v_2$. In sub-round 7, agent $a_4$ expects agent $a_3$ at node $v_3$. Since agent $a_2$ dies in sub-round 5, agent $a_4$ does not find it in sub-round 7, and correctly concludes port $p_3'$ leads to a black hole.

            \item \textbf{Sub-round 6:} If $v_1=v_{BH}$, no agent dies in this case as agents $a_1$, $a_2$ and $a_3$ are at node $v_2$, and agent $a_4$ is at $v_3$. In further sub-rounds of $r$, since
no agent visits node $v_1$, no agent dies during the sub-rounds of $r$. In some $r'> r$, it will be identified when agents visit node $v_1$ as a forward move (identical to Case (i)).

            If $v_2=v_{BH}$, agents $a_1$, $a_2$, $a_3$ die in sub-round 6 at node $v_2$. In sub-round 7, agent $a_4$ expects agent $a_3$ at node $v_3$. Since agent $a_2$ dies in sub-round 6, agent $a_4$ does not find it in sub-round 7, and correctly concludes port $p_3'$ leads to a black hole.

            \item \textbf{Sub-round 7:} If $v_1=v_{BH}$, no agent dies in this case as agents $a_1$, $a_2$ and $a_3$ are at node $v_2$, and agent $a_4$ is at $v_3$. In further sub-rounds of $r$, since
no agent visits node $v_1$, no agent dies during the sub-rounds of $r$. In some $r'> r$, it will be identified when agents visit node $v_1$ as a forward move (identical to Case (i)).

            If $v_2=v_{BH}$, agents $a_1$, $a_2$ die in sub-round 6 at node $v_2$. Agents $a_3$ and $a_4$ are at node $v_3$. In sub-round 1 of $r+1$, agent $a_3$ moves to node $v_2$, and it dies. In sub-round 3 of $r+1$, agent $a_4$ is expecting agent $a_3$ at node $v_3$, and it does not find them, and correctly concludes that port $p_3'$ leads to the black hole.
        \end{itemize}
        In this way, it correctly identifies a port that leads to a black hole if a black hole emerges in any sub-round of round $r'$ for backward.
    \end{enumerate} 
\end{itemize}

Since \textsc{Algo\_UXS} ensures perpetual exploration, the above analysis implies that if a black hole emerges at a round $r'$, its location is detected by at least one surviving agent. Furthermore, this information is eventually known to all surviving agents. This completes the proof.
\end{proof}

Based on Theorem \ref{thm:ebhs}, we have the following remarks.

\begin{remark}\label{re:1}
    If the black hole emerges in $\mathrm{poly}(n)$ time, our algorithm solves \textsc{Ebhs} in $\mathrm{poly}(n)$ time, and each agent needs $O(\log n)$ memory.
\end{remark}

\begin{remark}\label{re:2}
(\textsc{Ebhs} with knowledge of $n$) If the agents know the value of $n$, then after the black hole emerges, they require time proportional to the length of the UXS and $O(\log n)$ memory. In this case, instead of executing \textsc{Algo\_UXS} for $N = 2, 2^2, \ldots$, the agents repeat the UXS corresponding to $n$ indefinitely. Each agent only needs to store a constant number of port numbers, the IDs of the agents $a_1$, $a_2$, and $a_3$, and the current position in the UXS. Therefore, $O(\log n)$ bits of memory are sufficient.
\end{remark}



\begin{remark}\label{re:4}
(\textsc{Ebhs} using any single agent exploration algorithm $\mathcal{A}$: a tradeoff between time and memory/storage requirement) Consider any single agent exploration algorithm $\mathcal{A}$. It is easy to observe that in algorithm $\mathcal{A}$, every move performed by an agent is either a forward move or a backward. Such algorithms include most standard exploration procedures in the literature. By replacing each move of $\mathcal{A}$ with the cautious chain movement of four agents, we obtain a solution to \textsc{Ebhs} as follows.

If $\mathcal{A}$ performs perpetual exploration, each movement of a single agent in $\mathcal{A}$ is replaced by a cautious chain movement. Otherwise (exploration algorithm $\mathcal{A}$ with termination), each movement of a single agent in $\mathcal{A}$ is replaced by a cautious chain movement, and instead of termination, restart $\mathcal{A}$. When node storage is used, a constant amount of additional storage suffices to distinguish between old and new information.

The simulation preserves the node storage and prior knowledge required by $\mathcal{A}$. Each agent only needs to store its role in the chain movement, the identifiers of the other agents, and the local information used by $\mathcal{A}$; this requires $O(\max\{\mu,\log n\})$ bits of memory per agent, where $\mu$ is the memory used by $\mathcal{A}$. The time overhead is only a constant factor per simulated move. Hence, any exploration algorithm with node storage $\sigma$, knowledge $\kappa$, memory $\mu$, and time $T$ yields an \textsc{Ebhs} algorithm with node storage $\sigma$, knowledge $\kappa$, memory $O(\max\{\mu,\log n\})$ per agent, and time $O(T)$ after the emergence of the black hole.

As an example, a single agent can explore a static graph using a DFS traversal with $O(\log \Delta)$ node storage, $O(\log n)$ agent memory, and $O(m)$ rounds. By replacing each move of the DFS traversal with the cautious chain movement, we obtain a solution to \textsc{Ebhs} that uses the same $O(\log \Delta)$ node storage, $O(\log n)$ agent memory, and $O(m)$ rounds.
\end{remark}

\section{Conclusion}
\textsc{Algo\_Bhs\_Scattered} solves the 1-BHS problem on 1-bounded 1-interval connected TVGs starting with $2\delta_{BH}+17$ many scattered agents. One can also look for solutions on weakly-connected dynamic graphs such as temporal graphs. We provide a solution for the \textsc{Ebhs} problem on arbitrary static graphs using 4 co-located agents without using any other model assumptions extending  the existing literature on rings. Three agents can solve \textsc{Ebhs} on arbitrary graphs or not remains open.  Another direction is to study \textsc{Ebhs} on arbitrary dynamic graphs, starting from strongly connected dynamic graphs (1-interval connected) to weakly connected dynamic graphs (temporal connectivity).

\section*{Acknowledgement}
    Tanvir Kaur acknowledges the support from the CSIR, India (Grant No. 09/1005(0048)/2020-EMR-I). Ashish Saxena acknowledges the financial support from IIT Ropar. Kaushik Mondal acknowledges the ISIRD grant provided by IIT Ropar.

\bibliography{bib}

@inproceedings{GoswamiBD024,
  author       = {Pritam Goswami and
                  Adri Bhattacharya and
                  Raja Das and
                  Partha Sarathi Mandal},
  editor       = {},
  title        = {Perpetual Exploration of a Ring in Presence of Byzantine Black Hole},
  booktitle    = {OPODIS},
  series       = {},
  volume       = {324},
  pages        = {17:1--17:17},
  publisher    = {},
  year         = {2024},
  url          = {},
  doi          = {},
  bibsource    = {}
}

@inproceedings{Adri_ByzBH,
  author       = {Adri Bhattacharya and
                  Pritam Goswami and
                  Evangelos Bampas and
                  Partha Sarathi Mandal},
  editor       = {},
  title        = {Perpetual Exploration in Anonymous Synchronous Networks with a Byzantine
                  Black Hole},
  booktitle    = {DISC},
  series       = {},
  volume       = {356},
  pages        = {16:1--16:17},
  publisher    = {},
  year         = {2025},
  url          = {},
  doi          = {},
  bibsource    = {}
}

@inproceedings{Bonnet25,
  author       = {Fran{\c{c}}ois Bonnet and
                  Quentin Bramas and
                  Anissa Lamani},
  editor       = {},
  title        = {Brief Announcement: Searching for an Eventually-Emerging Black Hole
                  in Rings},
  booktitle    = {SSS},
  series       = {},
  volume       = {16350},
  pages        = {82--87},
  publisher    = {},
  year         = {2025},
  url          = {},
  doi          = {},
  bibsource    = {}
}

@inproceedings{Paola_2002,
author = {Dobrev, S. and Flocchini, P. and Prencipe, G. and Santoro, N.},
title = {Searching for a black hole in arbitrary networks: optimal mobile agent protocols},
year = {2002},
pages = {153–162},
series = {PODC}
}

@InProceedings{Paola_2004,
author="Dobrev, S.
and Flocchini, P.
and Santoro, N.",
title="Improved Bounds for Optimal Black Hole Search with a Network Map",
booktitle="SIROCCO",
year="2004",
pages="111--122"
}

@article{Pelc_2005,
  author       = {Jurek Czyzowicz and
                  Dariusz R. Kowalski and
                  Euripides Markou and
                  Andrzej Pelc},
  title        = {Searching for a Black Hole in Synchronous Tree Networks},
  journal      = {Comb. Probab. Comput.},
  volume       = {16},
  number       = {4},
  pages        = {595--619},
  year         = {2007}
}

@article{Paola_2010,
  author       = {Balasingham Balamohan and
                  Paola Flocchini and
                  Ali Miri and
                  Nicola Santoro},
  title        = {Time Optimal Algorithms for Black Hole Search in Rings},
  journal      = {Discret. Math. Algorithms Appl.},
  volume       = {3},
  number       = {4},
  pages        = {457--472},
  year         = {2011}
}

@InProceedings{Shantanu_2011,
author="Chalopin, J.
and Das, S.
and Labourel, A.
and Markou, E.",
title="Black Hole Search with Finite Automata Scattered in a Synchronous Torus",
booktitle="DISC",
year="2011",
pages="432--446"
}

@article{Paola_2006,
  author       = {S. Dobrev and
                  P. Flocchini and
                  G. Prencipe and
                  N. Santoro},
  title        = {Searching for a black hole in arbitrary networks: optimal mobile agents
                  protocols},
  journal      = {Distributed Comput.},
  volume       = {19},
  number       = {1},
  pages        = {1--35},
  year         = {2006}
  }

@article{bhattacharya_2023, title={Searching for a Black Hole in a Dynamic Cactus}, volume={29}, number={2}, journal={Journal of Graph Algorithms and Applications}, author={Bhattacharya, Adri and Italiano, Giuseppe F. and Mandal, Partha Sarathi}, year={2025}, month={May}, pages={127–166} }

@inproceedings{Adri_tori,
  author       = {Adri Bhattacharya and
                  Giuseppe F. Italiano and
                  Partha Sarathi Mandal},
  title        = {Black Hole Search in Dynamic Tori},
  booktitle    = {{SAND}},
  volume       = {292},
  pages        = {6:1--6:16},
  year         = {2024}
}

@inproceedings{MarkouP12,
  author       = {E. Markou and
                  M. Paquette},
  title        = {Black Hole Search and Exploration in Unoriented Tori with Synchronous Scattered Finite Automata},
  booktitle    = {OPODIS},
  pages        = {239--253},
  year         = {2012}
}

@article{complexity_black_hole,
  author       = {J. Czyzowicz and
                  D. R. Kowalski and
                  E. Markou and
                  A. Pelc},
  title        = {Complexity of Searching for a Black Hole},
  journal      = {Fundam. Informaticae},
  volume       = {71},
  number       = {2-3},
  pages        = {229--242},
  year         = {2006}, 
}

@article{bhs_tokens,
  author       = {S. Dobrev and
                  N. Santoro and
                  W. Shi},
  title        = {Using Scattered Mobile Agents to Locate a Black Hole in an un-Oriented Ring with Tokens},
  journal      = {Int. J. Found. Comput. Sci.},
  volume       = {19},
  number       = {6},
  pages        = {1355--1372},
  year         = {2008},
  
}

@inproceedings{BHS_gen,
  author       = {T. Kaur and
                  A. Saxena and
                  P. S. Mandal and
                  K. Mondal},
  title        = {Black Hole Search in Dynamic Graphs},
  booktitle    = {{ICDCN}},
  year         = {2025}
}

@article{dyn_ring_jrnl,
  author       = {G. A. D. Luna and
                  P. Flocchini and
                  G. Prencipe and
                  N. Santoro},
  title        = {Locating a black hole in a dynamic ring},
  journal      = {J. Parallel Distributed Comput.},
  volume       = {196},
  pages        = {104998},
  year         = {2025}
}

@article{GOTOH20211,
title = {Exploration of dynamic networks: Tight bounds on the number of agents},
journal = {Journal of Computer and System Sciences},
volume = {122},
pages = {1-18},
year = {2021},
issn = {0022-0000},
author = {Tsuyoshi Gotoh and Paola Flocchini and Toshimitsu Masuzawa and Nicola Santoro}
}

@inproceedings{Kuhn_2010,
author = {Kuhn, F. and Lynch, N. and Oshman, R.},
title = {Distributed computation in dynamic networks},
year = {2010},
booktitle = {Proceedings of the Forty-Second ACM Symposium on Theory of Computing},
pages = {513–522},
numpages = {10}
}

@inproceedings{BHS_SSS25,
  author       = {Tanvir Kaur and
                  Ashish Saxena and
                  Partha Sarathi Mandal and
                  Kaushik Mondal},
  title        = {Black Hole Search by Scattered Agents on Time-Varying Dynamic Graphs},
  booktitle    = {{SSS}},
  volume       = {16350},
  pages        = {309--324},
  year         = {2025}
}

@misc{bonnet_2026,
  author       = {François Bonnet and Quentin Bramas and Anissa Lamani},
  title        = {Searching for an Eventually-Emerging Black Hole in Rings},
  year         = {2026},
  note         = {Unpublished manuscript},
  howpublished = {\url{https://hal.science/hal-05517896}}
}

@article{Dobrev2007BHSRing,
  author    = {Stefan Dobrev and Paola Flocchini and Giuseppe Prencipe and Nicola Santoro},
  title     = {Mobile Search for a Black Hole in an Anonymous Ring},
  journal   = {Algorithmica},
  volume    = {48},
  number    = {1},
  pages     = {67--90},
  year      = {2007},
  doi       = {10.1007/s00453-006-1232-z}
}

@article{Reingold2008,
  author    = {Omer Reingold},
  title     = {Undirected connectivity in log-space},
  journal   = {J. {ACM}},
  volume    = {55},
  number    = {4},
  pages     = {17:1--17:24},
  year      = {2008},
}

\end{document}